\documentclass[12pt]{article}
\usepackage[a4paper, margin=2.5cm]{geometry}
\usepackage[utf8]{inputenc}
\usepackage[T2A]{fontenc}
\usepackage{amsmath, amssymb, amsthm, mathrsfs}
\usepackage{mathtools}
\usepackage{bm}
\usepackage{bbm}
\usepackage{mathrsfs}
\usepackage{tikz-cd}
\usepackage{graphicx}
\usepackage{jheppubm}
\usepackage[bbgreekl]{mathbbol}
\usepackage{braket}
\newcommand{\id}{\mathbbm{1}}

\newcommand{\cK}{\mathcal{K}}
\newcommand{\cF}{\mathcal{F}}

\newcommand{\cJ}{\mathcal{J}}
\newcommand{\cL}{\mathcal{L}}
\newcommand{\cO}{\mathcal{O}}
\newcommand{\cR}{\mathcal{R}}
\newcommand{\cS}{\mathcal{S}}
\newcommand{\by}{\bf{y}}
\newcommand{\mbf}{\mathbb {f}}
\newcommand{\mbg}{\mathbb {g}}
\newcommand{\bP}{\mathbb{P}}
\newcommand{\bQ}{\mathbb{Q}}

\newcommand{\bC}{\mathbb{C}}

\newcommand{\bL}{\mathbb{L}}
\newcommand{\bT}{\mathbb{T}}
\newcommand{\bY}{\mathbb{Y}}
\newcommand{\bF}{\mathbb{F}}
\newcommand{\bJ}{\mathbb{J}}
\newcommand{\bR}{\mathbb{R}}
\newcommand{\bU}{\mathbb{U}}

\newcommand{\bS}{\mathbb{S}}

\newcommand{\bPhi}{\mathbb{\Phi}}
\newcommand{\bGamma}{\mathbb{\Gamma}}
\newcommand{\bPsi}{\mathbb{\Psi}}

\newcommand{\bu}{\mathbb{u}}
\newcommand{\bba}{\mathbb{a}}
\newcommand{\bb}{\mathbb{b}}
\newcommand{\bd}{\mathbb{d}}
\newcommand{\bbe}{\mathbb{e}}
\newcommand{\br}{\mathbb{r}}
\newcommand{\bs}{\mathbb{s}}
\newcommand{\bnu}{\mathbb{n}}
\newcommand{\bH}{\mathbb{H}}
\newcommand{\bI}{\mathbb{I}}
\newcommand{\bD}{\mathbb{D}}
\newcommand{\bM}{\mathbb{M}}
\newcommand{\bW}{\mathbb{W}}
\newcommand{\bp}{\mbox{\bf{p}}}
\newcommand{\cP}{\mathcal{P}}
\newcommand{\cD}{\mathcal{D}}

\newcommand{\bG}{\mathbb{G}}

\newcommand{\fL}{\mathfrak{L}}

\newcommand{\Pv}{\mbox{Pv}}

 \newcommand{\sgn}{\mbox{sgn}}
  \newcommand{\mIm}{\mbox{Im}}
   \newcommand{\mRe}{\mbox{Re}}

\newcommand{\tr}{\operatorname{Tr}}

\newcommand{\bk}{\mathbf{k}}
\newcommand{\bx}{\mathbf {x}}

\newtheorem{lemma}{Lemma}
\newtheorem{theorem}{Theorem}
\newtheorem{definition}{Definition}
\theoremstyle{remark}

\definecolor{darkblue}{rgb}{0.,0,1}
\definecolor{dgreen}{rgb}{0,0.6,0}
\definecolor{orange}{rgb}{0.89,0.3,0.12}
\definecolor{oiBlue}{HTML}{0072B2}
\definecolor{oiVermilion}{HTML}{D55E00}
\definecolor{oiBlack}{HTML}{000000}

\newcommand{\IY}{\textcolor{red}}

\newcommand{\be}{\begin{equation}}
\newcommand{\ee}{\end{equation}}
\newcommand{\bea}{\begin{eqnarray}}
\newcommand{\eea}{\end{eqnarray}}
\newcommand{\nn}{\nonumber}

\title{Real Quantum Field Theory, $\bJ$-Quantization, and Standard Model}
\author{I. Aref'eva and I. Volovich}
\affiliation{Steklov Mathematical Institute, Russian Academy of Sciences, \\ Gubkina Street, 8, 119333 Moscow, Russia}
\emailAdd{arefeva@mi-ras.ru}
\emailAdd{volovich@mi-ras.ru}

\abstract{
We present a formulation of quantum field theory based entirely on real
numbers, which we call real quantum field theory (RQFT). The construction is
obtained from the standard complex-number formulation by replacing the imaginary unit $i$ with a matrix $\bJ$ throughout all formulas. Here  $\bJ$ is the real  2$\times$2 matrix  satisfying the condition $\bJ^2=-1$.
This extends the recently developed formulation of real quantum mechanics
based on the real K\"ahler space \cite{Volovich:2025rmi,Arefeva:2025zbx} to systems with infinitely many degrees
of freedom.
\\

As the basic example, we construct the RQFT for
a scalar field. We define a field operator
$
\bPhi(t,\mathbf{x})
$
acting on the real Kähler analogue of the bosonic Fock space. This operator
satisfies the Klein--Gordon equation and the $\bJ$-form of  canonical commutation relation
$
[\bPhi(t,\mathbf{x}),\dot{\bPhi}(t,\mathbf{y})]
=
\bJ\,\delta^{(3)}(\mathbf{x}-\mathbf{y}) .
$
To construct the RQFT we develop the corresponding \(\bJ\)-calculus. In particular, we introduce
the direct and inverse \(\bJ\)-Fourier transforms and the associated
\(\bJ\)-valued distributions. 
In RQFT, the usual unitarity condition for the complex $S$-matrix is replaced
by the statement that the real scattering operator is both orthogonal and
symplectic.
The physical
observables in RQFT coincide with those of ordinary QFT. Thus RQFT
does not change the physical predictions of scalar QFT, but provides an
 equivalent formulation.
\\

We explore the possibility of purely real formulations of the Standard Model of elementary particles. We show that it does admit this formulation and the resulting theory has ortho-symplectic symmetry. 
The choice of  the Standard Model  as the testbed for exploring the possibility of purely real formulations is 
related with the fact it  provides a realistic description of all known fundamental particles. 
The real formulation of the Standard Model naturally suggests a possible exit beyond the Standard Model. In particular, we consider the implications of breaking the $\bJ$-symmetry as a marker for new physics.
}

\begin{document}
\maketitle
\section{Introduction}

The question of whether it is possible to construct a quantum theory using only real numbers has a long history \cite{BVN,Stueckelberg,Varadarajan}. Recently, there has been a revival of interest in this question \cite{Renou:2021dvp,Wu:2022vvi,Chen:2021ril,
Finkelstein:2021zrt,
Chiribella:2022dgr,
Zhu:2020iml,Fuchs:2022rih,
Vedral:2023pij,Sarkar:2025tmd,Hita:2025okv,Hoffreumon:2025nmq,Feng:2025eci,Volovich:2025rmi,Arefeva:2025zbx,Maioli:2026mhy,Ying:2025xyl,Singh:2026dtp,Kalarde:2026fsn,Bang:2026lbt}, motivated primarily by developments in quantum information and computation \cite{Nielsen-book,Ohya-book}.
\\

In this paper, we generalize the results on real k\"ahler quantum mechanics \cite{Volovich:2025rmi,Arefeva:2025zbx} to theories with an infinite number of degrees of freedom.  Real quantum mechanics is constructed using the  quantization rule \cite{Volovich:2025rmi}
\be
\label{JQM}
[\bQ,\bP]=\bJ,\ee
where  $\bJ$ is the 2$\times$2 matrix with real entries satisfying the condition $\bJ^2=-1$. We use natural units $\hbar=c=1$
throughout.  One also has the corresponding Schrodinger equation with $i \to \bJ$. This quantum mechanics is formulated in a real K\"ahler space, instead of the usual formulation of quantum mechanics on a complex Hilbert space.
\\

Generalizing the finite-degree-of-freedom quantum mechanical formalism, which exclusively uses real numbers, to the infinite-dimensional setting of quantum field theory raises a twofold issue. The first aspect deals with quantizing a real scalar field and asks whether quantization can be performed without reference to imaginary unit $i$; in this context, quantization is a generalization of \eqref{JQM} and involves $\bJ$-quantization. The second aspect addresses whether the theory of elementary particles can be modeled, reformulated, or extended solely in terms of real fields.
\\

The primary motivation for formulating quantum field theory over the reals is to enable a unified framework combining elementary particle physics with gravity. Since the standard geometric formulation of gravity is naturally expressed in
terms of real differential-geometric structures, it is useful to ask whether
quantum field theory can also be formulated in a purely real geometric language.
 The Standard Model relies on complex internal symmetry groups, whereas the standard geometric formulation of gravity is naturally expressed in
terms of real differential-geometric structures.
Hence it is useful to ask whether
quantum field theory can also be formulated in a purely real language.
\\

We begin by constructing a quantum field theory for the simplest case of a scalar field, employing only real-valued fields. This involves a generalization of the commutation relation \eqref{JQM}, namely
\be\label{CRJ}
[\mathbb{\Phi}(t,\mathbf{x}),\,{\mathbb\Pi}(t,\mathbf{y})] = \bJ\,\delta(\bx-\by).
\ee
Whereas the usual scalar field is defined as a Hermitian operator in the complex Fock space, in this paper the scalar field
$\mathbb{\Phi}$ and its conjugate $\mathbb\Pi$
are taken to be  operators in a real Kähler Fock space.
\\

 Quantization over the reals is constructed simply by the substitution
\be
i \longmapsto \bJ
\ee
in the usual quantum mechanics and the usual field theories. We refer to the theory based on the $\bPhi$ fields as real quantum field theory (RQFT).
\\

The analogy between the symplectic and complex structures allowed us to develop the so-called $\bJ$-analysis, which amounts to replacing the imaginary unit $i$ with the matrix $\bJ$. In particular, this framework provides forms for the direct and inverse $\bJ$-Fourier transforms, as well as the theory  of $\bJ$-generalized functions and the $\bJ$-Sokhotski--Plemelj formula.   This analysis is related to the theory of analysis over commutative Banach
algebras developed in  \cite{VSV-IV} and related also with p-adic analysis \cite{p-adic}.
\\

We also discuss observables in RQFT and compare them with the corresponding observables in QFT. We have discussed this question in the context of real K\"ahler quantum mechanics \cite{Volovich:2025rmi,Arefeva:2025zbx,Maioli:2026mhy} and found that the corresponding observables coincide with those of usual complex quantum mechanics, see also \cite{Vedral:2023pij,Hita:2025okv,
Hoffreumon:2025nmq}.
In high-energy physics, some observable quantities are related to representations of the Poincar\'e group, while some others are associated with probabilities for scattering processes. In standard QFT, we deal with a unitary Poincar\'e representation, whereas in RQFT we have an orthogonal symplectic Poincar\'e representation. These representations are equivalent and are characterized by the same real numbers.
\\

The most important observables in high-energy physics, however, are not operators in the traditional sense but rather probabilities for scattering processes. These are calculated from scattering amplitudes and are derived from the $S$-matrix. Quantities such as decay rates and cross sections, measured in collider experiments, are derived from these amplitudes.
We introduce the $\bS$-matrix, which, unlike the unitary $S$-matrix of QFT, is an orthogonal and symplectic operator. From this $\bS$-matrix we define decay rates and cross sections and show that they coincide with those derived from the usual $S$-matrix.
\\

The $\bS$-matrix in RQFT is defined by the analog of the Dyson formula
\be
\bS = T \exp\left\{ \bJ \int d^4x \, \mathcal{L}_{\rm int}(\bPhi(x)) \right\}.
\ee
There are also new formulas for the $\bJ$-Feynman propagators.
The usual unitarity condition for the $S$-matrix is replaced by the orthogonality and symplectic conditions for $\bS$, which are realized perturbatively via the $\bJ$-Cutkosky rules.
\\

While RQFT, as an alternative formulation of ordinary QFT, has not  produced new results, it provides an equivalent but conceptually useful
formulation of ordinary quantum field theory. It shows that complex numbers are
not indispensable: their role can be played by a real
matrix complex structure acting on a real Kähler space. This viewpoint may be
helpful both for foundational questions and for the search for formulations of
quantum theory more directly connected with real symplectic geometry. It also offers a unified perspective on classical and quantum theories in terms of symplectic geometry \cite{Volovich-mon,Volovich:2024nzw,Koukoutsis:2026isn}. The interplay between symplectic and complex structures underpins the K\"ahler approach to quantum mechanics developed in \cite{Volovich:2025rmi,Arefeva:2025zbx,Maioli:2026mhy}.
\\

The second part of the paper is concerned with whether the Standard Model the  of elementary particles can be expressed using only real fields. We refer to this approach as realification of the theory.
The Standard Model is chosen for this investigation owing to its exceptional status as a realistic framework for fundamental interactions, which merits careful examination from both phenomenological and conceptual perspectives.
To that end, we begin by considering the realification of its gauge groups—namely U(1), SU(2), and SU(3). While this material is standard in the mathematical literature, we include it in Appendix \ref{app:GSM} for the sake of completeness.
While the Standard Model is usually formulated in terms of Weyl spinors,
 we find it instructive, in addition to the realification of Weyl spinors, to present the realification of Dirac and Majorana spinors;
 this analysis is collected in a dedicated Section \ref{app:fermi} of Appendix. Subsequently, for the realified Weyl spinor, we present the explicit form of the covariant derivative within the Standard Model. We then turn our attention to the realification of the Higgs doublet and provide details of its interactions with the realified spinors. Finally, we briefly outline how the spontaneous symmetry-breaking mechanism operates in this realified version.
\\

As a general remark about realification of a complex  theory by field doubling, the following remark is in order.
Any complex theory can be rewritten over the reals by doubling dimensions. These theories have $\bJ$-symmetry. Reformulation in term of reals provides a possibility to break this symmetry. 
In the Standard Model  the $\bJ$-symmetry is not broken and it is related with $U(1)$-symmetry.  Breaking $\bJ$-symmetry would take the theory beyond the complex-linear Standard Model framework. It is well known that  in the Standard Model  contains no   gauge-invariant bare fermion mass terms before electroweak symmetry breaking. Here we discuss the $\bJ$-symmetry breaking  in context of the neutrino masses.
From the perspective of a fundamentally real-field approach, the number of real fields need not be even, in contrast to complex formulations. Consequently, a complex theory -- which, when reformulated in real terms, is characterized by an even number of real degrees of freedom -- can be embedded into an extended framework with an odd number of real degrees of freedom. 
\\

The paper is organized as follows. In Section~\ref{sec:KF} we extend the real Kähler
formulation of quantum mechanics to the scalar quantum field and construct the
field operator $\bPhi$ acting in the real Kähler Fock space, that is defined in Appendix \ref{sec:KFock}. In
Section~3 we introduce the $\bJ$-Wightman functions and compute the
two-point functions for the massless and massive scalar fields. In Section~\ref{sec:J-Green} we introduce
the $\bJ$-Feynman Green function $\bG_{2,F}(x)$ and compute its $\bJ$-Fourier transform $\tilde\bG^{\bJ}_{2,F}(\bp)$. Section~\ref{sec:obs} discusses observables in RQFT and
compares them with the corresponding observables in standard QFT. In
Section~\ref{sec:JS} we formulate the scattering operator in RQFT and derive the
$\bJ$-analogue of the optical theorem, illustrating it by simple
examples. The appendices collect the mathematical tools used in the main text,
including the real--complex Fock-space correspondence, the $\bJ$-Fourier
transform, the $\bJ$-Sokhotski--Plemelj formula, and the
$\bJ$-Cutkosky rules. Finally in Section \ref{RSM}  we sketch main ingredients of the realification version of  the Standard Model starting from  the fermions of the Standard Model, Section \ref{app:GSM-fermi}, and  including gauge field, Section \ref{RSM-gauge} and the Higgs field, Section \ref{Higgs}. In Appendix \ref{app:GSM} we describe the realification of the groups of the Standard Model and in
 Appendix \ref{app:fermi} the realification of the Dirac, Majorana and Weyl  fermions.  
\\

\newpage

\section{Free Scalar Field in K\"ahler Fock   Space}\label{sec:KF}
The solution of the Klein-Gordon equation 
\be\label{KG}
(\partial^2_t-\Delta _x +m^2)\hat \bPhi=0,\ee
where $\Delta _x$ is the Laplace operator, $m^2\geq 0$, in $\bR ^4$ satisfying the condition \eqref{CRJ} can be written in the form 
\bea\label{phiK}
\bPhi(t,\mathbf{x}) 
= \int \frac{d^3k}{(2\pi)^{3/2}} \frac{1}{\sqrt{2\omega_k}} \left[ a(\mathbf{k}) e^{-\bJ(\omega_k t - \mathbf{k}\cdot\mathbf{x})} + a^\dagger(\mathbf{k}) e^{\bJ(\omega_k t - \mathbf{k}\cdot\mathbf{x})} \right],
\eea
where $\omega_k = \sqrt{\mathbf{k}^2 + m^2}$, and  creation and annihilation operators \( a(\mathbf{k}), a^\dagger(\mathbf{k}')\) satisfy
\begin{equation}\label{aa}
[a(\mathbf{k}), a^\dagger(\mathbf{k}')] = \delta^{(3)}(\mathbf{k} - \mathbf{k}'), \qquad [a(\mathbf{k}), a(\mathbf{k}')]= [a^\dagger(\mathbf{k}), a^\dagger(\mathbf{k}')]=0,
\end{equation}
and
\be\label{bJ}
\bJ = \begin{pmatrix}0&-1\\1&0\end{pmatrix},\qquad \bJ^2 = -\id.
\ee
The form of the matrix $\bJ$ is fixed from the isomorphism between $\bC$ and $\bR^2$ accepted in \cite{Volovich:2025rmi}, see also Lemma 1 in Appendix \ref{sec:C-R2}.

The field  $\hat \bPhi $  acts in $\cK\cF_{\mathbb R}
\left(
L^2_{\mathbb R}(\mathbb R ^3)
\right))$, see definition of  $\cK\cF_{\mathbb R}
\left(
L^2_{\mathbb R}(\mathbb R ^3)
\right))$ in Appendix \ref{sec:KFock}.
\\

We can also write
\be\label{bPhiMM}
\hat\bPhi(f) = \hat\bPhi^-(f)+
\hat\bPhi^+(f),
\ee
where
\bea\label{bF+}
\hat\bPhi^-(f)=\int  \frac{d^3k}{(2\pi)^{3/2}\sqrt{2\omega_k}} \,a(\bk)\,\tilde\mbf(\bk),\quad \hat\bPhi^+(f)=\int  \frac{d^3k}{(2\pi)^{3/2}\sqrt{2\omega_k}}\,a^\dagger(\bk)\,\tilde\mbf^T(k)\,
\eea
and 
\bea\label{mbf}
\tilde\mbf(\bk)&=& \int d^4x f(x)e^{-\bJ(\omega_k t - \mathbf{k}\cdot\mathbf{x})}, \quad 
\tilde\mbf^T(\bk)=  \int d^4x f(x)e^{\,\bJ(\omega_k t - \mathbf{k}\cdot\mathbf{x})}.
\eea

 $\hat\bPhi(f)$ is a 2$\times$2 matrix with entries being the operators. Note, that if $f(x)\in \cS(\bR^4)$, the formula \eqref{mbf} defines  $ \mbf (\bk)\in \cS(\bR^3)$, see Appendix \ref{app:JFT2}, theorem \ref{th:3}.
 \\
 
The time derivative of the scalar field \eqref{phiK} is

\bea\label{phi-der}
&&\partial_t\hat\bPhi(t,\mathbf{x}) = \int \frac{d^3k}{(2\pi)^{3/2}} \frac{\bJ \omega_k}{\sqrt{2\omega_k}} \left[- a(\mathbf{k}) e^{-\bJ(\omega_k t - \mathbf{k}\cdot\mathbf{x})} + a^\dagger(\mathbf{k}) e^{\bJ(\omega_k t - \mathbf{k}\cdot\mathbf{x})} \right]
\eea
and $\partial_t\hat\bPhi(f)$ defined as 
\bea
&&\partial_t\hat\bPhi(f) =
\int d^4x f(x)\partial_t\hat\bPhi(t,\mathbf{x})
\eea
can be represented as
\bea
\partial_t\hat\bPhi(f)\nn&=&
\bJ\int\frac{d^3k}{(2\pi)^{3/2}\sqrt{2\omega_k}} \, \,\omega_k[-a(\bk) \mbf(\bk) +a^\dagger(\bk)\mbf^T(\bk)]\eea
Therefore,
\bea\label{derPhi}
[\hat\bPhi(f),\partial_t\hat\bPhi(g)]
=\frac{\bJ}{2}\int \frac{d^3k}{(2 \pi)^3} \,\Big(\mbf(\bk)\,
\mbg^T(\bk)+\mbf^T(\bk)\mbg(\bk)\Big)\eea
\\

We can get the field  $\hat \bPhi$ by mapping  the usual scalar quantum field $\hat \phi $ acting in $\cF_{\mathbb C}
\left(
L^2_{\mathbb C}(\mathbb R ^3)
\right))$ to the field acting in $\cK\cF_{\mathbb R}
\left(
L^2_{\mathbb R}(\mathbb R ^3)
\right))$, i.e. 
\be\label{gamma-phi}
\cR\Big(\hat \phi (t,\bx)\Big)=\hat \bPhi (t,\bx),\ee
where
\begin{equation}\label{phi}
\hat\phi(t,\mathbf{x}) = \int \frac{d^3k}{(2\pi)^{3/2}} \frac{1}{\sqrt{2\omega_k}} \left[ a(\mathbf{k}) e^{-i(\omega_k t - \mathbf{k}\cdot\mathbf{x})} + a^\dagger(\mathbf{k}) e^{i(\omega_k t - \mathbf{k}\cdot\mathbf{x})} \right].
\end{equation}

In the sense of operator valued distributions
\be
\cR\Big(\hat\phi(f)\Big)=\hat\bPhi(\bF),\ee
where
\bea\label{phi-f}
\hat\phi(f)&=&\int d^4x f(x)\hat\phi(t,\mathbf{x}), \quad 
\hat\phi(f)=\hat\phi^-(f)+\hat\phi^+(f), 
\eea
with
\begin{align} \hat\phi^-(f)&=\int \frac{d^3k}{(2\pi)^{3/2}} \frac{1}{\sqrt{2\omega_k}} a(\bk)\tilde f (\bk),\quad 
\hat\phi^+(f)=\int \frac{d^3k}{(2\pi)^{3/2}} \frac{1}{\sqrt{2\omega_k}} a\dagger(\bk)\tilde f ^*(\bk),\\\label{ffs}
\tilde f (\bk)&=\int d^4x f(x)e^{-i(\omega_k t - \mathbf{k}\cdot\mathbf{x})},\quad \qquad 
\tilde f ^*(\bk)=\int d^4x f(x)e^{\IY{+}i(\omega_k t - \mathbf{k}\cdot\mathbf{x})}.
\end{align}
Note, that if $f(x)\in \cS(\bR^4)$, the formula \eqref{ffs} defines  $\tilde f (\bk),\tilde f ^*(\bk)\in \cS(\bR^3)$, see Appendix \ref{app:JFT2}, theorem \ref{th:2}.
\\

 Also note, that the commutation relation \eqref{derPhi} is the generalization of commutation relation 
\bea\label{scr}
[\hat\phi(f),\partial_t\hat\phi(g)]=\frac{i}{2}\int \frac{d^3k}{(2 \pi)^3} \,\Big(\tilde f(\bk)\,
\tilde g^*(\bk)+\tilde f^*(\bk)\tilde g(\bk)\Big)
\eea


\newpage
\section{ $\bJ$-Wightman functions}\label{sec:wightman}

\subsection{Two-point $\bJ$-Wightman Functions }
Taking into account \eqref{bPhiMM} we get
\bea\label{bFG}
\bW_{2}(f,g)=\langle\!\langle0|\bPhi(f)\bPhi(g)|0\rangle\!\rangle
=\int \frac{d^3k}{(2\pi)^{3}2\omega_k}\,\mbf(\bk) \mbg^T(\bk),\eea
$\mbf(\bk)$ and $\mbg^T(\bk)$ are defined in \eqref{mbf}. 
Here $|0\rangle\!\rangle$ is annihilated by $a(\bk)$,
\be\label{a0}
a(\bk)|0\rangle\!\rangle=0\ee
and
\be\label{alpha}
\langle\!\langle 0| (\alpha \id +\beta\bJ)|0\rangle\!\rangle=\alpha\id +\beta\bJ,\quad \alpha, \beta \in \bR.\ee

Note that the expression in \eqref{bFG} is the $2\times 2 $ matrix which has the form
\be
\bW_{2}(f,g)=\bW_{2,\id}(f,g)
+\bJ \,\bW_{2,\bJ}(f,g)\ee
and
\bea
\bW_{2,\id}(f,g)&=& \int \frac{d^3k}{(2\pi)^{3}2\omega_k}\,\frac {1}{2} \tr[\mbf(\bk) \mbg^T(\bk)]\label{bojborr}\\
\bW_{2,\bJ}(f,g)&=& -\int \frac{d^3k}{(2\pi)^{3}2\omega_k}\,\frac {1}{2} \tr[\mbf(\bk) \mbg^T(\bk)\bJ]\label{boJborr}\eea

$\gamma$ maps the $(\bW_{2}(f,g)$ to the usual two-point Wightman function
\be
\gamma (\bW_{2}(f,g))=W_2(f,g),\ee
where $W_2(f,g)$ is the usual 
 two-point Wightman function
 \bea\label{phi(f)phi(g)}
W_2(f,g)&=&\langle0|\hat\phi(f)\hat\phi(g)|0\rangle=\langle0|\hat\phi^-(f)\hat\phi^+(g)|0\rangle
\\\nn
&=&\int \frac{d^3k}{(2\pi)^{3}2\omega_k}\,\tilde f(\bk) \tilde g^*(\bk),\eea
where $\tilde f (\bk)$ and $\tilde g ^*(\bk)$ are defined by \eqref{ffs}
and it is assumed that $f(x),g(x)$ are real functions. In the usual case 
\be
\label{2W}
W_2(f,g)=\int d^4 x d^4y f(x)g(y)W_2(x,y),\ee
where
\bea\label{W2}
W_2(x;y) =\langle0|\hat\phi(x_0,\mathbf{x})\hat\phi(y_0,\mathbf{y})|0\rangle=\int \frac{d^3p}{(2\pi)^3 \, 2\omega_p} \, e^{-ip\cdot(x-y)}, 
\eea
here $p\cdot x = \omega_p x^0 - \mathbf{p}\cdot \mathbf{x}$.
\\

The $\bJ$-analogue of formula \eqref{2W} is 
\be
\label{2W}
\bW_2(f,g)=\int d^4 x d^4y f(x)g(y)\bW_2(x,y),\ee
where
\bea\label{WW2}
\bW_2(x;y) =\int \frac{d^3p}{(2\pi)^3 \, 2\omega_p} \, e^{-\bJ p\cdot(x-y)}. 
\eea
The regularized version of this formula is
\bea\label{WW2epsilonm}
\bW_{2,\epsilon}(x;y;m) =\int \frac{d^3p}{(2\pi)^3 \, 2\omega_p} \,e^{-\epsilon |p|}\, e^{-\bJ p\cdot(x-y)}. 
\eea
\subsubsection{Two-point $\bJ$-Wightman function for massless field}

Let us compute the $\bJ$-Wightman function for a massless scalar field in \(3+1\) dimensions with the regularization:
\bea\label{bW2epsilon}
 \bW_{2,\epsilon}(x^0,\vec x;y^0,\vec y)&=&\int\frac{d^3\vec{p}}{(2\pi)^3 \, 2|\vec{p}|}e^{-\epsilon |p|}e^{-\bJ p\cdot (x-y)},
\eea
here $p\cdot (x -y)= |\bp| (x^0 -y^0) -\mathbf{p}\cdot (\bx-\by)$.
For simplicity consider the case $y=0$ and we denote $ \bW_{2,\epsilon}(x^0,\vec x;0,0)\equiv \bW_{2,\epsilon}(x^0,\vec x)$.
Using spherical coordinates 
$d^3\vec{p}=p^2 dp\,d\Omega$, $p=|\vec{p}|$, $r=|\vec{x}|$, and $\vec{p}\cdot\vec{x}=pr\cos\theta$ after angular integration,
\bea\label{anglesJ}
 \int d\Omega\, e^{\bJ pr\cos\theta} 
= 4\pi\,\frac{\sin(pr)}{pr}
\eea
we get
 \bea
  \bW_{2,\epsilon}(x^0,r)
  &=& \frac{1}{4\pi^2 r} \int_0^\infty dp\; e^{-\epsilon|p|}(\cos(p x^0)-\bJ \sin(p x^0) ]\sin(pr)
\eea
Performing integration over momentum  we finally get
\bea\label{3.54}
\bW_{2,\epsilon}(x^0,r)&=&\frac{1}{4\pi^2 }\frac{\epsilon ^2+r^2-x_0^2-2 \epsilon  \bJ
   x_0}{\left(\epsilon
   ^2+\left(r-x_0\right){}^2\right) \left(\epsilon
   ^2+\left(r+x_0\right){}^2\right)}\nn\\
 &\approx& \frac{1}{4\pi^2 }\frac{r^2-x_0^2-2 \epsilon  \bJ
   x_0}{(r^2-x_0^2)^2+\epsilon
   ^2(r^2+x_0^2)}\eea
   and taking into account \eqref{4.45} we get
   \bea\label{3.55}
\lim_{\epsilon \to 0}\bW_{2,\epsilon}(x^0,\bx)&=&-\frac{1}{4\pi^2 }\cP\Big(\frac{1}{x_0^2-\bx^2} \Big)-\frac{1}{4\pi}
\bJ\,\sgn  x^0
\delta(x^2)
\eea
  \bea
  \left(\left(\epsilon +\bJ x_0\right)^2+r^2 \right)\bW_{2,\epsilon}(x^0,r)=\frac{1}{4\pi^2 },\eea
  that is equivalent to
  \be
  \bW_{2,\epsilon}(x^0,r)=\frac{1}{4\pi^2 }\, \frac {1}{r^2+\left(\epsilon +\bJ x_0\right)^2},\ee

  Hence 
  \bea\label{3.59}
 \bW_{2}(x;y)&=&\lim_{\epsilon\to 0}  \bW_{2,\epsilon}(x;y)=\frac{1}{4\pi^2 }\, \frac {1}{{(\bx-\by)^2} -(x_0-y_0)^2+\bJ \,\sgn (x_0-y_0)\, 0}\\ \eea
Comparing with the usual Wightman function $W_2$
 \bea\label{3.60}
 W_{2}(x;y)=\lim_{\epsilon\to 0}  W_{2,\epsilon}(x;y)=
 \frac{1}{4\pi^2 }\, \frac {1}{{(\bx-\by)^2} -(x_0-y_0)^2+i \,\sgn (x_0-y_0)\, 0}\eea
   we see that formula \eqref{3.59} is obtained from \eqref{3.60} simply by replacing $i$ with $\bJ$.
\subsubsection{Two-point $\bJ$-Wightman function for massive field}
The 2-point $\bJ$-Wightman function for massive field with  the regularization \eqref{WW2epsilonm} can be represented as
\be
\bW_{2,\epsilon}(x;y;m)
=
\frac{m}{4\pi^2}
\frac{
K_1\!\left(m\sqrt{-(x-y)^2+2\bJ\epsilon\,(x^0-y^0)+\epsilon^2}\right)
}{
\sqrt{-(x-y)^2+2\bJ\epsilon\,(x^0-y^0)+\epsilon^2}
},
\ee
and the  massive Wightman function can be written as a distributional boundary value
\be
\bW_2(x;y;m)
=
\frac{m}{4\pi^2}
\frac{
K_1\!\left(m\sqrt{-X^2+\bJ\,0\,(x^0-y^0)}\right)
}{
\sqrt{-(x-y)^2+\bJ\,0\,(x^0-y^0)}
}.
\ee

For spacelike separation, we have
\be\label{Wms}
\lim _{\epsilon \to 0}\bW_{2,\epsilon}(x;y;m)
=
\frac{m}{4\pi^2}
\frac{
K_1\!\left(m\sqrt{-(x-y)^2}\right)
}{
\sqrt{-(x-y)^2}
}\,\mathbf ,
\qquad (x-y)^2<0,
\ee
and for timelike separation,
\bea\label{Wmt}
\lim _{\epsilon \to 0}\bW_{2,\epsilon}(x;y;m)
&=&
\frac{m}{8\pi\sqrt{(x-y)^2}}
Y_1\!\left(m\sqrt{(x-y)^2}\right)
\\\nn&+&
\frac{m}{8\pi\sqrt{(x-y)^2}}
\operatorname{sgn}(x^0-y^0)\bJ\,
J_1\!\left(m\sqrt{(x-y)^2}\right),
\qquad (x-y)^2>0 .
\eea
here $J_1$
 and $Y_1$
 are the two standard  Bessel functions of order 1. Both $J_1(z)$
 and $Y_1(z)$ are real for positive real argument $z>0$ \cite{Abramowitz:1964}.

Near the light cone, $(x-y)^2\to 0$, the massive Wightman function has the
distributional expansion
\bea\label{Wmlc}
\lim _{\epsilon \to 0}\bW_{2,\epsilon}(x;y;m)
&=&
\frac{1}{4\pi^2}
\frac{1}{-(x-y)^2+\bJ\,0\,(x^0-y^0)}
\\
&
+&
\frac{m^2}{8\pi^2}
\left[
\log\left(
\frac{m}{2}
\sqrt{-(x-y)^2+\bJ\,0\,(x^0-y^0}
\right)
+\gamma_E-\frac12
\right]
\\\nn&+&
\cO\!\left((x-y)^2\log (x-y)^2\right)
\qquad \mbox{for}\quad (x-y)^2\to 0 .
\eea
Thus the leading light-cone singularity of the massive Wightman function
coincides with the massless Wightman distribution, but the full massive
Wightman function contains additional mass-dependent logarithmic terms.


The answers \eqref{Wms}, \eqref{Wmt} and \eqref{Wmlc} are generalization of the usual QFT answers that can be obtained by the replacement $\bJ \longmapsto i$.
\subsection{n-point $\bJ$-Wightman Functions }
n-point $\bJ$-Wightman functions are defined as  
\bea
&&\bW_{n}(f_1,\ldots f_{n})=\langle\!\langle0|\bPhi(f_1)\ldots\bPhi(f_{n})|0\rangle\!\rangle
\label{2nW}
,\eea
As in the usual case n-point functions factorize into the sum  of the product of 2-point functions
:
\be
\bW_{n}(f_1,\ldots f_{n})=
\begin{cases}
0, & n \text{ odd}, \\[6pt]
\displaystyle \sum_{\text{pairings } P} \; \prod_{(i,j)\in P} \bW_2(f_i, f_j), & n = 2m \text{ even},
\end{cases}
\ee
where $\bW_2(f,g) $ are given by \eqref{bFG} and the sum runs over all distinct ways to partition the set $\{1,2,\dots,2m\}$ into $m$ unordered pairs.

\newpage
\section{Feynman $\bJ$-Green Functions}\label{sec:J-Green}

\subsection{Two-point Feynman $\bJ$-Green Functions}
The  2-point Feynman $\bJ$-Green function is defined as
\bea\label{bG}
&&\qquad\qquad\qquad\bG(1,2)\equiv \bG(t_1,\mathbf{x}_1;t_2,\mathbf{x}_2)\\\nn&=&\theta(t_1-t_2)\langle\!\langle0|{\mathbb\Phi}(t_1,\mathbf{x}_1){\mathbb\Phi}(t_2,\mathbf{x}_2)|0\rangle\!\rangle+\theta(t_2-t_1)\langle\!\langle0|{\mathbb\Phi}(t_2,\mathbf{x}_2){\mathbb\Phi}(t_1,\mathbf{x}_1)|0\rangle\!\rangle,
\eea
where $\theta(t)$ is the Heaviside function, $\theta(t)=1$ for $t>0$, $\theta(t)=0$ for $t<0$.
Using \eqref{phiK},\eqref{a0} and \eqref{alpha}
we get
 \be\label{bGF}
\bG_F(x_1;x_2) = \int \frac{d^3k}{(2\pi)^3}\frac{1}{2\omega_k}\left[ \theta(T) e^{-\bJ\omega_k T + \bJ(\mathbf{k}\cdot\mathbf{R})} + \theta(-T) e^{\bJ\omega_k T - \bJ(\mathbf{k}\cdot\mathbf{R})} \right],
\ee
where
\be  T = t_1-t_2, \quad \mathbf{R} = \mathbf{x}_1-\mathbf{x}_2, \quad \omega_k = \sqrt{\bk^2+m^2},
\ee
To find the   $\bJ$-Fourier transform of $\bG_F(x_1;x_2)$ we make a regularization 
\be\label{FbGr}
\tilde{\bG}^\bJ_{ F,\epsilon}(p) = \int d^3 \bx\,\Big[\int_0^\infty dT  e^{-\bJ p\cdot x-\epsilon T}+ \int_{-\infty}^0 dT  e^{-\bJ p\cdot x+\epsilon T}\Big]\,\bG_F(x),
\ee
Substituting $\bG_F(x)$ given by \eqref{bGF} and interchanging integrals gives
\bea\nn
\tilde{\bG}^\bJ_{F,\epsilon}(p) &=& \int \frac{d^3k}{(2\pi)^3}\frac{1}{2\omega_k}\Big[ 
\int_0^\infty dT\, e^{-\bJ (p^0+\omega_k)T-\epsilon T} \int d^3R\, e^{\bJ (\mathbf{p}+\mathbf{k})\cdot\mathbf{R}}\\ &+& \int_{-\infty}^0 dT\, e^{-\bJ (p^0-\omega_k)T+\epsilon T} \int d^3R\, e^{\bJ (\mathbf{p}-\mathbf{k})\cdot\mathbf{R}} \Big].
\eea

The spatial integrals, see  \eqref{D4} in Appendix, yield
\be
\int d^3R\, e^{\bJ \mathbf{q}\cdot\mathbf{R}} = (2\pi)^3 \delta^{(3)}(\mathbf{q})\,\id,
\ee
so we obtain
\be
\tilde{\bG}^\bJ_{F,\epsilon}(p) = \frac{1}{2\omega_p}\left[ \int_0^\infty dT\, e^{-\bJ (p^0+\omega_p)T-\epsilon T} + \int_{-\infty}^0 dT\, e^{-\bJ (p^0-\omega_p)T+\epsilon T} \right].
\ee
Changing the variable $u=-T$ in the second integral gives
\be\label{4.29}
\tilde{\bG}^\bJ_{ F,\epsilon}(p) = \frac{1}{2\omega_p} \int_0^\infty dT\left[\, e^{-\bJ (p^0+\omega_p)T-\epsilon T} +  e^{\bJ (p^0-\omega_p)T-\epsilon T} \right].
\ee

We represent the integrands in \eqref{4.29} as 
\bea\label{4.30}
\tilde{\bG}^\bJ_{F,\epsilon}(p) &=& \frac{1}{2\omega_p} \int_0^\infty dT
\,\Big[ \Big(\cos(p^0+\omega_p)T+\cos((p^0-\omega_p)T\Big)e^{-\epsilon T}  \\\nn
&&\qquad \quad\quad -\bJ\Big(\sin(p^0+\omega_p)T -\sin(p^0-\omega_p)T \Big)e^{-\epsilon T} \Big]
\eea and performing the integrations we get 
\bea\label{JGint}
\tilde{\bG}^\bJ_{F,\epsilon}(p)&=& \label{4.31}\frac{1}{2\omega_p} 
\,\Big[\epsilon  \left(\frac{1}{\epsilon ^2+\left(p_0+\omega
   \right){}^2}+\frac{1}{\epsilon ^2+\left(p_0-\omega
   \right){}^2}\right)\\\nn
&\quad&\qquad+\bJ\Big(-\frac{p_0+\omega }{\epsilon
   ^2+(p_0+\omega )^2}+\frac{p_0-\omega }{\epsilon
   ^2+(p_0-\omega )^2}\Big)\Big]\eea
Using that in the sense of distribution one has identities \eqref{4.45} (see \cite{VVS})
we get for the first line in \eqref{4.31}
\bea
&&\lim_{\epsilon\to 0}  \frac{1}{2\omega_p}\left(\frac{\epsilon}{\epsilon ^2+\left(p_0+\omega
   \right){}^2}+\frac{\epsilon}{\epsilon ^2+\left(p_0-\omega
   \right){}^2}\right)\\
   &=& \frac{\pi}{2\omega_p}\Big(\delta(p_0+\omega)+\delta(p_0-\omega)\Big)=\pi\delta(\omega_p^2-p^2_0)
   \eea
and for
the second line one in \eqref{4.31}
\bea
&&\lim_{\epsilon\to 0}  \frac{1}{2\omega_p}\left(\bJ\Big(-\frac{p_0+\omega }{\epsilon
   ^2+(p_0+\omega )^2}+\frac{p_0-\omega }{\epsilon
   ^2+(p_0-\omega_p )^2}\Big)\right)\\
 && =\bJ \frac{1}{2\omega_p}\left(-\cP(\frac{1}{p_0+\omega_p})+\cP(\frac{1}{p_0-\omega_p})\right)=-\bJ \,\cP(\frac{1}{\omega^2_p -p^2_0}).\eea
Hence finally we get
\bea\label{4.30}
\lim _{\epsilon \to 0}\tilde{\bG}^\bJ_{ F,\epsilon}(p)=\tilde{\bG}^\bJ_{F}(p)
,
\eea
where 
\bea \label{bGG} \tilde{\bG}^\bJ_{F}(p)
=-\bJ \,\cP(\frac{1}{\omega^2_p -p^2_0})&+&\pi\delta(\omega_p^2-p^2_0)=\bJ \,\cP(\frac{1}{p^2-m^2})+\pi\delta(m^2-p^2)
\eea 

Note, that we can get the same result \eqref{bGG} stating from
\bea
\label{bGeps} \tilde{\bG}^\bJ_{ F,simple,\varepsilon}(p) =\frac{\mathbb{J}}{p^2 - m^2 + \bJ\varepsilon^2},
\eea
i.e.
\bea
\label{bGr} \tilde{\bG}^\bJ_{F}(p)= \lim _{\varepsilon \to 0}\tilde{\bG}^\bJ_{F,simple,\varepsilon}(p)
\eea
Indeed, multiplying the numerator and denominator of \eqref{bGeps}  on $p^2 - m^2 - \bJ\varepsilon^2$
we get
\bea
\label{bGeps} \tilde{\bG}^\bJ_{ F,simple,\varepsilon}(p) =\frac{\mathbb{J}(p^2 - m^2 - \bJ\epsilon^2)}{(p^2 - m^2)^2 + \epsilon^4}=\frac{\varepsilon^2+\bJ(p^2 - m^2 )}{(p^2 - m^2)^2 + \varepsilon^4},
\eea
and according \eqref{4.45} we finally get
\bea
\label{bGepss} 
\lim_{\epsilon \to 0}\tilde{\bG}^\bJ_{F,simple,\varepsilon}(p) &=&\lim_{\varepsilon \to 0}\frac{\varepsilon^2}{(p^2 - m^2)^2 + \varepsilon^4}+\bJ\lim_{\varepsilon \to 0}\frac{(p^2 - m^2 )}{(p^2 - m^2)^2 + \varepsilon^4}\\\nn\\&=&\pi \delta(p^2 - m^2)+ \bJ\cP\Big(\frac{1}{p^2 - m^2 }\Big),
\eea
i.e. we get \eqref{bGG}.

It is instructive to compare the propagators for fixed regularizations $\epsilon$ and $\varepsilon$.
For the first line of 
\eqref{JGint} we have
\bea\label{ere}
\frac{\epsilon}{2\omega_p}   \left(\frac{1}{\epsilon ^2+\left(p_0+\omega
   \right){}^2}+\frac{1}{\epsilon ^2+\left(p_0-\omega
   \right){}^2}\right)
 & \approx&
   \frac{2 \epsilon  \left(p_0^2+\omega ^2\right)}{\left(p^2-m^2\right)^2+2 \epsilon ^2
   \left(p_0^2+\omega ^2\right)}
   \eea
   and for the second  line of 
\eqref{JGint} we have
   \bea\label{ire}
   \bJ\frac{1}{2\omega_p} \Big(-\frac{p_0+\omega }{\epsilon
   ^2+(p_0+\omega )^2}+\frac{p_0-\omega}{\epsilon
    ^2+(p_0-\omega )^2}\Big)
  \approx \bJ \frac{ \left(p^2-m^2\right)}{\left(p^2-m^2\right)^2+2 \epsilon ^2
   \left(p_0^2+\omega ^2\right)}
   \eea
Comparing \eqref{ere} and \eqref{ire} with
\bea
\label{bGs} \tilde{\bG}^\bJ_{ F,simple,\varepsilon}(p) =\frac{\varepsilon}{(p^2 - m^2)^2+\varepsilon ^2}+\bJ\frac{(p^2 - m^2)}{(p^2 - m^2)^2+\varepsilon ^2},
\eea
we see that the second term in RHS of \eqref{bGs} coincides with \eqref{ire}
after identification 
\be
\label{ident}2 \varepsilon ^2=\epsilon ^2
   2(p_0^2+\omega ^2).
   \ee
  The first term in 
in RHS of \eqref{bGs} after identification
\eqref{ident}  is different from the last term in \eqref{ere}
 by a factor that is equal to 1 on mass shell $m^2=p^2$.

\subsection{$\cR$ Map of the Standard Complex Answer}
By applying $\cR$ to the standard complex propagator:
\be
\tilde G_{F}(k)=\frac{i}{k^2-m^2+  i0}=i\cP( \frac{1}{k^2-m^2})+\pi \delta(k^2-m^2),\nn\\\label{GF}\ee
we get
\bea\nn
\cR\!\left(\tilde G_F(k)\right)&=& \cR\!\left( \frac{i}{p^2 - m^2 + i0} \right) = \frac{\mathbb{J}}{p^2 - m^2 + \bJ 0}\\
 &=& 
 \bJ\,\cP(\frac{1}{p^2 - m^2 })+\pi \delta(p^2 - m^2 )
 \label{bG}
\eea

Therefore, we have proved that 
 \be
 \tilde\bG^{\bJ}_{F}(p) =\cR\!\left(\tilde G_{F}(p)\right),\ee


\newpage
\section{Observables in RQFT versus Observables in QFT}\label{sec:obs}

In QFT, the quantities that are typically observed are not the field operators themselves at a point, but rather matrix elements, probabilities, spectra, and expectation values of observables. The most important experimentally observed quantities in QFT are the following.

\begin{itemize}
\item Masses and particle quantum numbers.
Particles are classified by eigenvalues of the Poincaré generators,
$
P_\mu P^\mu = m^2,
$ 
and by spin or helicity. Thus one observes
$
m,\ s,\ q,\ B,\ L,\ \dots,
$
where \(q\) is the electric charge, \(B\) the baryon number, \(L\) the lepton number, etc. These quantities are the same in both QFT and RQFT.

\item Expectation values of local observables.
Given a state \(|\Psi\rangle\), one considers
$
\langle \Psi | \mathcal{O}(x) | \Psi \rangle,
$
where \(\mathcal{O}(x)\) is a local observable.
In gauge theories, physical observables must be gauge-invariant; examples include
$
\langle \Psi |F^{\mu\nu}F_{\mu\nu}|\Psi\rangle$,
$
\langle \Psi |\bar{\psi}\psi |\Psi\rangle$, $\mathrm{Tr}\, W(C)
$, here $W(C)$
is the Wilson loop
\begin{equation}
W(C) = \mathcal{P} \exp\left(i \oint_C A_\mu \, dx^\mu \right),
\end{equation}
where \(\mathcal{P}\) denotes path ordering. These quantities are real numbers, and there is no difference between their values in QFT and RQFT.

\item 
The correlation functions
$
G_n(x_1,\dots,x_n) = \langle 0 | T\{\phi(x_1)\cdots\phi(x_n)\} | 0 \rangle
$
are not directly measured; however, via the LSZ reduction formula they determine the \(S\)-matrix, whose matrix elements
$
\langle f | S | i \rangle
$
are related to measurable quantities (see below).

\item Scattering amplitudes.
The \(S\)-matrix provides the transition amplitudes
$
S_{fi} = \langle f | S | i \rangle$.
The amplitude itself is not directly a probability, but it determines measurable probabilities.
\begin{itemize}
\item Cross sections.
For a scattering process
$
a + b \to 1 + \cdots + n$,
the observable quantity is the differential cross section
\be\label{dsigma}
d\sigma \sim |\mathcal{M}_{fi}|^2 \, d\Pi_f,
\ee
where \(\mathcal{M}_{fi}\) is the invariant amplitude (generally a complex number in QFT), and \(d\Pi_f\) is the positive final-state phase-space measure. In RQFT this formula is modified, but it yields the same result for \(d\sigma\) (see below).

\item Decay widths and lifetimes.
For a decay
$
A \to f$,
one measures the partial decay width
$
\Gamma(A \to f)$,
and the total width
\be
\Gamma_A = \sum_f \Gamma(A \to f).
\ee
The lifetime is then
$
\tau_A = \frac{1}{\Gamma_A}$.
\end{itemize}

\item Number of particles and spectra.
In scattering experiments, one observes stable outgoing particles and deals with asymptotic number operators
$
N_p = a_p^\dagger a_p$.
Measurable distributions include
$
\frac{dN}{d^3p}$, $\frac{dN}{dy}$, $ \frac{dN}{dp_T}$.
For unstable particles, the unstable particle itself is not observed as an exact asymptotic state; instead, one observes its decay products and reconstructs the resonance.
\end{itemize}
\newpage
\newpage
\section{$\bS$-matrix}\label{sec:JS}
The $\bS[g]$ matrix in the RQFT  is defined by the analog of the Dyson formula
\be \bS=T\exp \{\,\bJ \int d^4x g(x)L_{int}(\bPhi(x))\}\ee
with  \(g(x)\) is a smooth switching function describing where and when the
interaction is turned on. The physical \(S\)-matrix is recovered formally in
the adiabatic limit
$
g(x)\to 1$. For example, in $\lambda 
\phi^4$
 theory, $L_{int}(\bPhi(x))=\bPhi(x)$, where $\bPhi(x)$ is the real quantum scalar field defined in $\cK\cF_{\mathbb R}
\left(
L^2_{\mathbb R}(\mathbb R )
\right))$ by \eqref{phiK}.

As in the Bogoliubov--Shirkov formulation \cite{BS} we treat the  
$
\bS[g],
$
as a functional on the space of test functions with compact support, i.e. $g\in \ \cD({\mathbb R}^4)$. This is the starting point of causal perturbation theory, where the interaction is first switched off at infinity and only later the adiabatic limit is considered.

The perturbative expansion is written as
\be
\bS[g]
=
\mathbf{1}
+
\sum_{n=1}^{\infty}
\frac{1}{n!}
\int d^4x_1\cdots d^4x_n\,
\bS_n(x_1,\ldots,x_n)
g(x_1)\cdots g(x_n).
\ee
We also supposed that the following axioms are fulfilled.
\begin{itemize}
\item  Normalization.
If there is no interaction, then there is no scattering:
$
S[0]=\mathbf{1}$.
\item Relativistic covariance.
Under a Poincaré transformation
\[
x\mapsto \Lambda x+a,
\]
one requires
\[
\bU(\Lambda,a)\,\bS[g]\,\bU^{-1}(\Lambda,a)
=
\bS[g_{\Lambda,a}],
\]
where
$
g_{\Lambda,a}(x)
=
g\!\left(\Lambda^{-1}(x-a)\right).
$
In the formula above,
$
\bU(\Lambda,a)
$
is the orthogonal symplectic operator acting in the real Fock space  which represents a
Poincaré transformation, see Appendix.
$\bU(\Lambda,a)$ acts on the scalar quantum field as $$\bU(\Lambda,a)\,\bPhi(x)\,U^{-1}(\Lambda,a)
 =
 \bPhi(\Lambda x+a).
$$
\item Causality. In functional form it is
written as
\[
\frac{\delta}{\delta g(x)}
\left(
\frac{\delta \bS[g]}{\delta g(y)}
\bS^T[g]
\right)
=0,
\qquad
x\lesssim y .
\]
Here \(x\lesssim y\) means that \(x\) is not later than 
\(y\), i.e. \(x\) lies
in the causal past of \(y\), or is spacelike separated from \(y\).
\item  Unitarity.
For real \(g(x)\), the \(S\)-matrix must be orthogonal and symplectic:
\bea
\bS^T[g]\bS[g]
&=&
\bS[g]\bS^T[g]
=
\mathbf{1}.\\
\bS[g]\bJ&=&\bJ \bS[g]
\eea
\end{itemize}

\subsection{Orthogonality  and Symplectivity Conditions on $\bS$ Matrix in $\bJ$-QFT }
For $g(x)=1$ we write the $\bS$-matrix 
as
\be\label{SJ}
\bS=\id+\bJ \bT
\ee
The symplecticity condition means
\be \label{ST}\bJ \bT=\bT \bJ \ee
and the orthogonality condition 
\be
\label{UJ}
\bS^T\bS=\id 
\ee
is equivalent to
\be\label{UnT}
 \bJ(\bT^T-\bT)=\bT^T \bT\ee
 This relation is the analog of the usual unitarity relation, that one writes using the decomposition 
\be
S =1 + iT,
\ee
where $1$ describes the absence of scattering and
$T$ is the nontrivial transition operator.
Substituting this decomposition into
\[
S^\dagger S=\mathbf{1},
\]
one obtains
\be
i(T^\dagger-T)
=
T^\dagger T,
\ee
that is often is written as 
\be\label{1T-TT}
2\,\mIm\, T
=
T^\dagger T,
\ee
where $2\,\mIm \,T=i(T^\dagger-T)$.

Introducing notations
\be
\bT=T_\id \id + T_\bJ\,\bJ\ee
and assuming $T_\id^T=T_\id$,
$T_\bJ ^T=T_\bJ $, i.e.
\be
\bT^T=T^T_\id \id -\bJ \,T^T_\bJ\ee
we get 
\be
\bJ(\bT^T-\bT)=2 \,T_\bJ\ee
and one gets 
\be
 2\, T_\bJ=\bT^T \bT,\ee
that is an analog of \eqref{1T-TT}.
\\

Taking matrix elements between states \(|i\rangle\) and \(|f\rangle\), one obtains
\be\label{bT-TT}
 \bJ(\langle f|\bT^T|i\rangle-\langle f|\bT |i\rangle)=\sum_n
\langle f|\bT^T |n\rangle\langle n|\bT|i\rangle\ee
where \(\sum_n\) denotes a sum, and in general also an integral, over all
intermediate physical states.
As in the usual case we can rewrite the relation \eqref{bT-TT} as a relation  for the invariant amplitudes $\bM(i\to f)$\footnote{As usual, the invariant amplitude, also called the reduced matrix element is obtained from $\langle f|\bT |i\rangle$ 
after removing the trivial conservation law, more explicitly,
$
\bT_{fi}
=
(2\pi)^4
\delta^{(4)}(P_f-P_i)\,
\bM_{fi}.
$} that are now the 2$\times$2 matrices
\be \bM(i\to f)=\bM_\id(i\to f)\id+\bM_\bJ(i\to f)\bJ\ee
\be
\label{JUM}
2\bM_\bJ(i\to f)
=
\sum_n \int d\Pi_n \,(2\pi)^4\,\delta^{(4)}(p_i - p_n)\;
\mathbb{M}(i\to n)\,\mathbb{M}(f\to n)^T,
\ee
where the phase-space integral is
$
d\Pi_n = \prod_{j\in n} \frac{d^3 p_j}{(2\pi)^3\,2\omega_{p_j}}$.
\\

Let us check the relation \eqref{JUM}
on  simple examples. Consider   $\lambda 
\bPhi^4$
 theory, 
 \be L_{int}(\bPhi(x))=\lambda 
\bPhi^4.
\ee
The $\bS$ matrix is defined by the analog of the the Dyson formula
\be \bS=T\exp \{\,\bJ \int d^4x L_{int}(\bPhi(x))\}\ee
Let us consider  the scattering $2\to 2$, 
$|\bk_1;\bk_2\rangle\to
|\bk_1';\bk_2'\rangle$, where 
\be
|\bk_1;\bk_2\rangle=a_1^\dagger(\bk_1)a_2^\dagger(\bk_2)|0\rangle\ee

At first‑order, the invariant amplitude is simply the tree‑level vertex:
\be
\bT^{(1)}= g,\quad \bT^{T(1)}=g, \quad \bT^{T(1)}- \bT^{(1)} = 0,\qquad (\bT^T\bT)^{(1)}=0,
\ee
i.e. the relation \eqref{UnT} holds trivially.
\\

Let us now consider the second order on $g$. The Born approximation in mass-shell is zero by the kinematical reason.  Let us consider now the second order of the amplitude $|\bk_1;\bk_2\rangle\to
|\bk_1';\bk_2'\rangle$. 

\newpage
\subsection{One‑Loop  Amplitude $\bM_{1-loop}$}

In $\lambda 
\bPhi^4$
 theory, the bubble one‑loop  correction to the four‑point vertex is

\bea\label{JbM1}
\bJ\,\bM_{1-loop}(s) &=& \frac{\lambda^2}{2} \bI(s), \\\label{JI} \bI(s)&=&-\int \frac{d^4 k}{(2\pi)^4} \tilde\bG^{\bJ}_{F,\epsilon}(k) \bG^{\bJ}_{F,\epsilon}(p-k),\quad s \equiv p^2, \eea
where $\tilde\bG^{\bJ}_{F,\epsilon}(k)$ is defined by \eqref{bGeps}. 
The explicit form of $\bI(s)$ is given by \bea \bI(s) &=&  \frac{\bJ}{16\pi^2} \left[ \frac{1}{\hat{\epsilon}} + \ln\left(\frac{\mu^2}{m^2}\right) - \bF(s) \right], \\ \bF(s) &=& \begin{cases} F_<(s), \quad & 0<s < 4m^2,\\[8pt] F_>(s) - \bJ\pi \sqrt{1 - \frac{4m^2}{s}}, & s > 4m^2, \end{cases} \eea 
where \bea F_<(s)&=&-2 + 2\arctan\Big(\frac{1}{\sqrt{\frac{4m^2}{s}-1}}\Big),\\ F_>(s)&=& -2 + \sqrt{1 - \frac{4m^2}{s}} \left[\ln\!\left(\frac{1 + \sqrt{1 - \frac{4m^2}{s}}}{1 - \sqrt{1 - \frac{ 4m^2}{s}}}\right) \right], \eea
and
 \be \label{epsilon-reg}\frac{1}{\hat{\epsilon}} = \frac{1}{\epsilon} - \gamma_E + \ln(4\pi).\ee 
The form of \eqref{epsilon-reg} is related to dimensional regularization $d = 4 - 2\epsilon$ and the MS subtraction scheme.
\\

For comparison, in the usual complex formulation the analogue of  \eqref{JbM1} is 
\bea \label{usualM} 
iM_{1\text{-loop}}(s) &=& \frac{\lambda^2}{2}I(s)\\
\label{Ip} I(s) &=& \int \frac{d^4 k}{(2\pi)^4} \frac{1}{k^2 - m^2 + i\epsilon} \frac{1}{(p-k)^2 - m^2 + i\epsilon}, \quad s \equiv p^2. \eea 
The integral in \eqref{Ip} is logarithmically divergent in the ultraviolet and must be regularised. Using dimensional regularisation $d = 4 - 2\epsilon$ and the $\overline{\mbox{MS}}$ subtraction scheme, the explicit evaluation gives the well known answer\footnote{For different masses one gets the Passarino–Veltman scalar two-point function $B_0(p^2;m_1^2,m_2^2)$ that is one of the most common loop integrals in QFT \cite{Passarino:1978jh,Siringo:2012mi}.} 
\bea I(s) &=& \frac{i}{16\pi^2} \left[ \frac{1}{\hat{\epsilon}} + \ln\left(\frac{\mu^2}{m^2}\right) - F(s) \right], \\ F(s) &=& \begin{cases} F_<(s), \quad & s < 4m^2,\\[8pt] F_>(s) - i\pi \sqrt{1 - \frac{4m^2}{s}}, & s > 4m^2. \end{cases} \eea 
 
For \(s > 4m^2\), the two internal propagators in \eqref{JI} as well in \eqref{Ip} can simultaneously go on shell. 
The $\bJ$-part of the loop integral is : 
\be\label{bM1} \bM_{\bJ,1-loop}(s) = \frac{\lambda^2}{32\pi} \sqrt{1 - \frac{4m^2}{s}}. \ee It is the analog of imaginary part of the loop integral: 
\be\label{Mim} \mIm\, M_{1-loop}(s) = \frac{\lambda^2}{32\pi} \sqrt{1 - \frac{4m^2}{s}}. \ee 
The relation \eqref{bM1} expresses the validity of the $\bJ$-optical theorem for the forward \(2 \to 2\) amplitude. Indeed,  in the 1-loop approximation this theorem reads 
\be\label{ROT} 
2 \bM_{\bJ,1-loop}(s)=\frac12\int d\Phi_2 \,\tr[\,\bM^T_{tree}(s)\bM_{tree}(s)] 
\ee 
At tree level, $ \bM_{tree} = -\lambda \id$ and therefore $\frac12\tr[\,\bM^T_{tree}(s)\bM_{tree}(s)]=\lambda^2$. Thus \eqref{ROT} becomes
\be\label{bM1e} 
2 \bM_{\bJ,1-loop}(s)=\lambda^2\int d\Phi_2. 
\ee 
The two‑body phase space for identical final particles is 
\be\label{dphi} 
\int d\Phi_2 = \frac{1}{2!} \frac{d^3 k_1}{(2\pi)^3 2\omega_{k_1}} \frac{d^3 k_2}{(2\pi)^3 2\omega_{k_2}} (2\pi)^4 \delta^{(4)}(p - k_1 - k_2)= \frac{1}{16\pi} \sqrt{1 - \frac{4m^2}{s}}. 
\ee 
Substituting \eqref{dphi} into \eqref{bM1e}, we find
\be \bM_{\bJ,1\text{-loop}}(s)
=
\frac{\lambda^2}{32\pi}
\sqrt{1-\frac{4m^2}{s}},
\ee
which coincides  as \eqref{bM1}. \\

This is the direct analogue of the usual one-loop optical theorem
\cite{MS},
\begin{equation}
\label{eq:usual-optical}
2\,\operatorname{Im} M_{1\text{-loop}}(s)
=\int d\Phi_2\,
\left|M_{\text{tree}}(s)\right|^2 .
\end{equation}
Indeed, using
$
M_{\text{tree}}(s)=-\lambda
$
and the phase-space integral \eqref{dphi}, one obtains
$$
\mIm\, M_{1\text{-loop}}(s)
=
\frac{\lambda^2}{32\pi}
\sqrt{1-\frac{4m^2}{s}},
$$
which is precisely \eqref{Mim}.

\newpage

\subsection{Decays in RQFT}
In the theory with an interaction action $$V(\{\phi_i\})=g\int \phi_1 (x)\phi_2(x) \phi_3(x),  $$
where $\phi_i$ are scalar fields with masses $m_1,m_2,m_3$ under condition 
\be 
m_{3} > m_1 + m_2,
\ee
the particle 3 can decay on particles 1 and 2. 

To be more concrete, let us consider 
 a classic example of a spontaneous decay where one parent particle decays into exactly two daughter particles - the decay of the short-lived neutral kaon to pions:
\[
K^0_S \;\longrightarrow\; \pi^+ + \pi^-.
\]

The  rest masses are: 
    $
    m_{K} \approx 498$ MeV,
$m_{\pi^+} = m_{\pi^-}\approx 140$ MeV.
  The decay is allowed since:
$
m_K > m_{\pi^+} + m_{\pi^-}.
$

The differential decay rate (probability of decay per unit time) for a $1 \to 2$ process is given by
\be\label{dGamma}
d\Gamma = \frac{\tr(\bM^T\,\bM) }{2 m_K}
(2\pi)^4 \delta^{(4)}(p - k_1 - k_2)
\frac{d^3 \bk_1}{(2\pi)^3 \, 2\omega_{\bk_1}}
\frac{d^3 \bk_2}{(2\pi)^3 \, 2\omega_{\bk_2}},
\ee
with $\omega_\bk = \sqrt{|\bk|^2 + m_\pi^2}$.

The total decay width is obtained by integrating over the two‑body phase space:
\be
\Gamma(K \to \pi^+ \pi^-) =\frac{g^2}{2 m_K} \int d\Phi_2=
\frac{g^2}{16\pi m_K^2} \sqrt{m_K^2 - 4m_\pi^2}
\ee
Note that here all calculations use only real number and $\Gamma$ as well as the decay probability $P(t)=
P(t) = 1 - e^{-\Gamma t}
$. Therefore,  RQFT they are the same as in usual case. 
\\
\newpage
\section{Realification  of Standard Model}\label{RSM}
Here we choose the Standard Model of elementary particles to investigate the feasibility of purely real formulations. This choice is motivated by the theory's unique status as a realistic model of fundamental interactions, which merits comprehensive study from both phenomenological and conceptual perspectives.

\subsection{Realification of  Fermions  of Standard Model}\label{app:GSM-fermi}
The fermionic sector of the Standard Model is  formulated in terms of
complex chiral Weyl fields \cite{MS,DTong}. In a pure real formulation, each complex Weyl field
is replaced by a real field with twice as many real components, together with a
real complex-structure operator $\bJ$, see Appendix \ref{app:RealWeyl} about realification complex Weyl fermion fields.
The realification does not change the physical particle content. It only
replaces complex coordinates by pairs of real coordinates.
\\

For one fermion generation, the left-handed quark doublet is
\begin{equation}
Q_L=
\begin{pmatrix}
u_L\\
d_L
\end{pmatrix}
\sim
(\mathbf{3},\mathbf{2},\tfrac16).
\end{equation}
and left-handed lepton doublet is
\begin{equation}
L_L=
\begin{pmatrix}
\nu_L\\
e_L
\end{pmatrix}
\sim
(\mathbf{1},\mathbf{2},-\tfrac12).
\end{equation}
The right-handed fields are
\begin{equation}
u_R\sim(\mathbf{3},\mathbf{1},\tfrac23),\quad
d_R\sim(\mathbf{3},\mathbf{1},-\tfrac13),
\quad
e_R\sim(\mathbf{1},\mathbf{1},-1).
\end{equation}
In the minimal Standard Model there is no right-handed neutrino.
The complete fermion content is obtained by repeating this set for three
generations. The realification of a Weyl fermion in the Standard Model is carried out along the lines of the discussion in Section \ref{app:RealWeyl}, specifically equations \eqref{W1} and \eqref{W2}.
The realified fermion fields are denoted by
$$
\mathbb{Q}_L
=
\mathcal{R}(Q_L),
\quad
\mathbb{L}_L
=
\mathcal{R}(L_L),
\quad
\bu_R
=
\mathcal{R}(u_R),\quad
\bd_R
=
\mathcal{R}(d_R),\quad
\mathbb{e}_R
=
\mathcal{R}(e_R).$$
For example, the complex quark doublet $Q_L$ has
$
2\times 3\times 2=12$
complex components when the Weyl-spinor, color, and weak-isospin indices are
included. Its realification $\mathbb{Q}_L$ has
24
real Grassmann components. Realification does not double the physical content. It merely replaces twelve complex
components by twenty-four real components.

The fermion content consists of three generations:
1-st generation:
$
u,\, d,\, e,\, \nu_e$,
second generation: $
c,\, s,\, \mu,\, \nu_\mu$
and third generation:
$
t,\, b,\, \tau,\, \nu_\tau$. The realification of the second and third generations proceeds analogously to the first generation. The gauge interactions are identical for all three generations. In what follows, we consider only the first generation in detail.

\subsection{Gauge Transformation of the Real Fermion Fields}\label{RSM-gauge}

The Standard Model gauge group remains $G_{\mathrm{SM}}=
SU(3)_c\times SU(2)_L\times U(1)_Y,$ but its complex unitary representations are rewritten in realified formulation as real representations
that are simultaneously orthogonal and symplectic.

In the realified formulation, each complex gauge transformation
\bea
&&U(x)
=
U_3(x)\,
U_2(x)\,
e^{iY\alpha_Y(x)},\\
&& U_3(x)\in SU(3)_c,\quad
U_2(x)\in SU(2)_L,\quad e^{iY\alpha_Y(x)}\in U(1)_Y,
\eea
is replaced by its realification
\begin{equation}
\mathcal{R}(U)
=
\begin{pmatrix}
\operatorname{Re}U & -\operatorname{Im}U \\
\operatorname{Im}U & \operatorname{Re}U
\end{pmatrix}.
\end{equation}
On the realified Weyl field \eqref{W2} the finite gauge transformation acts as
\be
\Xi(x)
\longrightarrow
\mathcal{R}
\left(
U_3(x)\,
U_2(x)\,
e^{iY\alpha_Y(x)}
\right)
\Xi(x).
\ee
Since the realification map preserves matrix products,
$
\mathcal{R}(U_1U_2)
=
\mathcal{R}(U_1)\,
\mathcal{R}(U_2),
$
the transformation may equivalently be written as
\begin{equation}
\Xi(x)
\longrightarrow
\mathcal{R}(U_3(x))\,
\mathcal{R}(U_2(x))\,
\mathcal{R}
\left(
e^{iY\alpha_Y(x)}
\right)
\Xi(x).
\end{equation}
More details about realification of group $U(1)$, $SU(2)$ and $SU(2)$ see Appendix \ref{app:GSM}.
\\

For an infinitesimal gauge transformation,
\begin{equation}
U_3(x)
=
I+i\alpha_c^a(x)T_c^a
+
\mathcal{O}(\alpha_c^2),
\end{equation}
\begin{equation}
U_2(x)
=
I+i\alpha_L^i(x)T_L^i
+
\mathcal{O}(\alpha_L^2),
\end{equation}
and
\begin{equation}
e^{iY\alpha_Y(x)}
=
I+iY\alpha_Y(x)
+
\mathcal{O}(\alpha_Y^2).
\end{equation}

Therefore, the infinitesimal transformation of the realified Weyl field is
\be
\delta\Xi
=
\left[
\alpha_c^a\,
\mathcal{R}(iT_c^a)
+
\alpha_L^i\,
\mathcal{R}(iT_L^i)
+
Y\alpha_Y\,\bJ
\right]\Xi.
\ee

It is convenient to introduce the real generators
\begin{equation}
\mathbb{T}_c^a
=
\mathcal{R}(iT_c^a),
\qquad
\mathbb{T}_L^i
=
\mathcal{R}(iT_L^i).
\end{equation}

Then
\be
\delta\Xi
=
\left[
\alpha_c^a\mathbb{T}_c^a
+
\alpha_L^i\mathbb{T}_L^i
+
Y\alpha_Y\bJ
\right]\Xi.
\ee

If the original generators $T_c^a$ and $T_L^i$ are Hermitian, then the
realified generators are real antisymmetric matrices:
\begin{equation}
\left(\mathbb{T}_c^a\right)^T
=
-\mathbb{T}_c^a,
\qquad
\left(\mathbb{T}_L^i\right)^T
=
-\mathbb{T}_L^i.
\end{equation}

They also commute with the complex structure:
\begin{equation}
\left[
\mathbb{T}_c^a,\bJ
\right]
=
0,
\qquad
\left[
\mathbb{T}_L^i,\bJ
\right]
=
0.
\end{equation}

Consequently, the realified gauge transformations preserve both the Euclidean
metric and the symplectic structure defined by $\bJ$. In particular,
\begin{equation}
\mathcal{R}(U)^T\mathcal{R}(U)
=
I,
\end{equation}
and
\begin{equation}
\mathcal{R}(U)^T\bJ\mathcal{R}(U)
=
\bJ.
\end{equation}

Hence the realified Standard Model gauge group acts through matrices belonging
to
\begin{equation}
O(2N)\cap Sp(2N,\mathbb{R}),
\end{equation}
with the additional condition that these matrices commute with the fixed
complex structure $\bJ$.
Therefore, the gauge group itself remains
$
SU(3)_c
\times
SU(2)_L
\times
U(1)_Y,
$
but it acts on the realified Weyl fermions through real
orthogonal-symplectic matrices. The complex phases of the ordinary formulation
are represented by real rotations generated by the complex-structure operator
$\bJ$.
\\

\subsubsection*{Gauge Transformations of the realified Weyl fermions of one generation}
The realified  left-handed quark quadruplet
\begin{equation}\label{trnsf}
\bQ_L
=
\begin{pmatrix}
\bu_L\\
\bd_L
\end{pmatrix}\quad \mbox{transforms as}\quad
\bQ_L
\longrightarrow
\bU_c\,
\bU_L\,
e^{\bJ\alpha_Y/6}
\bQ_L.
\ee
With explicit color and weak-isospin indices, \eqref{trnsf} means 
\be
(\bQ_L)^{\bba\br}
\longrightarrow
(\bU_c)^{\bba}_{\bb}
(\bU_L)^{\br}_{\bs}
e^{\bJ\alpha_Y/6}
(\bQ_L)^{\bb\bs},\quad \mbox{where}\quad
\bba,\,\bb= \overline{1,6},
\qquad
\br,\bs=1,2.
\ee
The  relified right-handed up and down quark transforms as
\be
\bu_R
\longrightarrow
\bU_c\,
e^{2\bJ\alpha_Y/3}
\bu_R, \qquad
\bd_R
\longrightarrow
U_c\,
e^{-\bJ\alpha_Y/3}
\bd_R.
\ee
The  relified left-handed lepton doublet is
\begin{equation}
\bL_L
=
\begin{pmatrix}
\bnu_L\\
\bbe_L
\end{pmatrix}\quad \mbox{transforms as}\quad
\bL_L
\longrightarrow
\bU_L\,
e^{-\bJ\alpha_Y/2}
\bL_L.
\end{equation}
The  relified  right-handed charged lepton transforms as
\be
\bbe_R
\longrightarrow
e^{-\bJ\alpha_Y}
e_R.
\ee

All these transformations are real, orthogonal, symplectic, and commute with
$\bJ$.

\subsection{Gauge Curvature in the Real Formulation}
The curvature of the real covariant derivative is defined by
$
[\mathbb{D}_\mu,\mathbb{D}_\nu]
=
\mathbb{F}_{\mu\nu}$.
It takes the form
\begin{equation}
\mathbb{F}_{\mu\nu}
=
g_sG_{\mu\nu}^A\mathbb{T}_C^A
+
gW_{\mu\nu}^I\mathbb{T}_L^I
+
g'YB_{\mu\nu}\bJ.
\end{equation}

Here
\begin{equation}
G_{\mu\nu}^A
=
\partial_\mu G_\nu^A
-
\partial_\nu G_\mu^A
+
g_s f^{ABC}G_\mu^B G_\nu^C,
\end{equation}
\begin{equation}
W_{\mu\nu}^I
=
\partial_\mu W_\nu^I
-
\partial_\nu W_\mu^I
+
g\epsilon^{IJK}W_\mu^\bJ W_\nu^K,
\end{equation}
and
\begin{equation}
B_{\mu\nu}
=
\partial_\mu B_\nu
-
\partial_\nu B_\mu.
\end{equation}

All matrices entering $\mathbb{F}_{\mu\nu}$ are real and antisymmetric.

\subsection{Realification of the Higgs Field}
\label{Higgs}
The Standard Model contains one complex scalar doublet,
\be
H(x)
=
\begin{pmatrix}
H^+(x)\\
H^0(x)
\end{pmatrix}.
\ee
It is a singlet under the color group $SU(3)_c$, a doublet
under $SU(2)_L$, and has weak hypercharge
$
Y_H=\frac12$,
\be
H(x)
\longrightarrow
U_L(x)\,
e^{i\alpha_Y(x)/2}
H(x).
\ee
The gauge-invariant kinetic term is
$
\mathcal L_{\mathrm{kin}}
=
(D_\mu H)^\dagger
D^\mu H,
$
where
the gauge-covariant derivative acting on the Higgs doublet is
\be
D_\mu
=
\partial_\mu
-
igW_\mu^i\frac{\tau_i}{2}
-
ig'\frac12B_\mu,
\ee
here $W_\mu^i$ are the $SU(2)_L$ gauge fields, $B_\mu$ is the $U(1)_Y$ gauge
field, and $\tau_i$ are the Pauli matrices.
The  Higgs potential is
$
V(H)
=
-\mu^2H^\dagger H
+
\lambda(H^\dagger H)^2$. The Standard Model fermions interact with the Higgs field through the Yukawa
Lagrangian,
\begin{equation}\label{LY}
\mathcal L_Y
=
-
\overline Q_LY_dHd_R
-
\overline Q_LY_u\widetilde H u_R
-
\overline L_LY_eHe_R
+
\mathrm{h.c.},
\end{equation}
here $
\widetilde H
=
i\sigma^2H^*$
is the conjugate Higgs doublet, and $Y_u$, $Y_d$, and $Y_e$ are complex Yukawa
matrices in generation space. Thus the presence of 
$\tilde H$
 only in the one Yukawa term  is the unique choice that makes all three Yukawa interactions invariant under
$SU_L(2)\times U(1)$.

In all-left-handed two-component notation, the Standard Model Yukawa
Lagrangian is
\bea
\mathcal L_Y
&=&
-\mathcal Y_u
-\mathcal Y_d
-\mathcal Y_e
+\mathrm{h.c.},
\\
\mathcal Y_u
&=&
(Y_u)_{ij}
\epsilon_{\alpha\beta}
\epsilon_{ab}
(u_L^c)_{i}^{\alpha A}
(Q_L)_{jA}^{\beta a}
H^b,
\label{eq:complex-up-yukawa}
\\
\mathcal Y_d
&=&
(Y_d)_{ij}
\epsilon_{\alpha\beta}
(d_L^c)_{i}^{\alpha A}
(Q_L)_{jA}^{\beta a}
H_a^*,
\label{eq:complex-down-yukawa}
\\
\mathcal Y_e
&=&
(Y_e)_{ij}
\epsilon_{\alpha\beta}
(e_L^c)_{i}^{\alpha}
(L_L)_{j}^{\beta a}
H_a^*.
\label{eq:complex-electron-yukawa}
\eea
The hypercharges in these terms are $0$.
\\

Realification of the Standard Model complex Higgs doublet
$H\in\mathbb C^2$
is obtained by  replacing complex Higgs doublet by
\begin{equation}
\mathbb H
=
\begin{pmatrix}
\operatorname{Re}H\\
\operatorname{Im}H
\end{pmatrix}
\in\mathbb R^4.
\end{equation}
Under the electroweak gauge transformation,
$
H
\longrightarrow
U_Le^{i\alpha_Y/2}H,
$
the realified Higgs field transforms as
\be
\bH
\longrightarrow
\cR\left(U_Le^{i\alpha_Y/2}\right)\bH.
\ee
The  complex Yukawa matrix $Y$ in \eqref{LY} can be realified accoding the generel formula:
\begin{equation}
Y=A+iB
\quad\longrightarrow\quad
\mathcal R(Y)
=
\begin{pmatrix}
A&-B\\
B&A
\end{pmatrix}.
\end{equation}
The realification of the conjugate Higgs doublet is
\be
\widetilde{\mathbb H}
=
\mathcal R(i\sigma^2)\,
\mathbb K\,
\mathbb H
,\quad \mbox{where}\quad 
\mathbb K
=
\begin{pmatrix}
I_2&0\\
0&-I_2
\end{pmatrix}.\ee
The realified Yukawa sector becomes a real multilinear coupling among
\begin{equation}
\bQ_{L},
\quad
\bu_{R},
\quad
\bd_{R},
\quad
\bL_{L},
\quad
\bbe_{R},
\quad
\bH.
\end{equation}
and after algebraic calculations
\bea\nn
\fL_Y\equiv\cR(\cL_Y)
&=&-2(\bT_{u})_{ij;rst}
\epsilon_{\alpha\beta}
\epsilon_{ab}
(\mathbb u_L^c)_{i,r}^{\alpha A}
(\mathbb Q_L)_{jA,s}^{\beta a}
\mathbb H_t^b
-2
(\bT_d)_{ij;rst}
\epsilon_{\alpha\beta}
(\mathbb d_L^c)_{i,r}^{\alpha A}
(\mathbb Q_L)_{jA,s}^{\beta a}
\mathbb H_{a,t}
\\
&-&2(\bT_{e})_{ij;rst}
\epsilon_{\alpha\beta}
(\mathbb e_L^c)_{i,r}^{\alpha}
(\mathbb L_L)_{j,s}^{\beta a}
\mathbb H_{a,t}.
\eea
where
\be
\begin{aligned}
(\bT_u)_{ij;rst}&=
(\mRe Y_{u})_{ij}C_{rst}
-
(\mIm Y_{u})_{ij}S_{rst}
\\
(\bT_d)_{ij;rst}&=(-1)^t
\left[
(\mRe Y_{d})_{ij}C_{rst}
-
(\mIm Y_{d})_{ij}S_{rst}
\right]
\\
(\bT_{e})_{ij;rst}&=
(-1)^t
\left[
(\mRe Y_{e})_{ij}C_{rst}
-
(\mIm Y_{e})_{ij}S_{rst}
\right]
.
\end{aligned}
\end{equation}
and
\begin{equation}
C_{rst}
=
\cos\left(
\frac{\pi}{2}(r+s+t)
\right),\qquad
S_{rst}
=
\sin\left(
\frac{\pi}{2}(r+s+t)
\right).
\end{equation}
Complex phases in the Yukawa matrices are not lost. They are encoded in the
parts of the realified matrices that mix the real and imaginary components.
Consequently, the CKM (Cabibbo–Kobayashi–Maskawa) and PMNS (Pontecorvo–Maki–Nakagawa–Sakata) phases remain present.

\subsection{Spontaneous Symmetry Breaking}

The realification of the Higgs doublet
\be
H
=
\frac{1}{\sqrt{2}}
\begin{pmatrix}
0\\
v+h
\end{pmatrix}.
\ee
where $h$ is the quantum field and $v$ is a classical constant, after Spontaneous symmetry breaking in the unitary gauge,  is
\be
\bH
=
\frac{1}{\sqrt{2}}
\begin{pmatrix}
0\\
v+h\\
0\\
0
\end{pmatrix},
\end{equation}
in the basis
\begin{equation}
\mathbb H
=
\begin{pmatrix}
\operatorname{Re}H^+\\
\operatorname{Re}H^0\\
\operatorname{Im}H^+\\
\operatorname{Im}H^0
\end{pmatrix}.
\end{equation}

The complex fermion mass matrices are
\begin{equation}
M_f
=
\frac{v}{\sqrt{2}}Y_f,
\qquad
f=u,d,e.
\end{equation}

Their realifications are
\begin{equation}
\mathbb M_f
=
\mathcal R(M_f)
=
\frac{v}{\sqrt{2}}
\begin{pmatrix}
Y_{fR}&-Y_{fI}\\
Y_{fI}&Y_{fR}
\end{pmatrix}.
\end{equation}

Replacing \(v\) by \(v+h\), one obtains both the mass term and the interaction
with the physical Higgs boson:
\begin{equation}
\mathbb M_f(h)
=
\frac{v+h}{\sqrt{2}}\,
\mathbb Y_f.
\end{equation}

Therefore,
the Higgs coupling to each fermion is consequently still proportional to the
corresponding fermion mass.
\\

To conclude this section, we note that the purely real formulation of the Standard Model is not a new theory, but a more unified mathematical representation of the existing theory. One can say that it provides a more powerful perspective for understanding the ultimate laws of nature and serves as an indispensable tool for exploring physics beyond the Standard Model, particularly in the realms of unification, neutrino physics, and non-perturbative quantum field theory.

An important open question for future work is whether the isomorphism
$U(N)
\cong
O(2N)\cap Sp(2N,\bR)
$
reveals deeper geometric structures inherent to the Standard Model that are inaccessible through ordinary complex formulation.
\subsection{Breaking the \(\mathbb{J}\)-Symmetry and Mass Terms in Realified Models}

It is well known that in the Weyl spinor formulation, the Standard Model contains no explicit mass terms for fermions in its fundamental Lagrangian—that is, no gauge-invariant bare fermion mass terms before electroweak symmetry breaking. All masses must be generated through the Higgs mechanism, with the sole exception of a possible Majorana mass for a right-handed neutrino singlet, which is not part of the minimal Standard Model. The \(\mathbb{J}\)-symmetry serves as a "protector" of masslessness in the realified Standard Model.

The violation of the \(\mathbb{J}\)-symmetry in the realified formalism allows for the presence of mass terms for fermions. Therefore, when this symmetry is broken, it opens the door to a variety of fermion mass terms that are either forbidden or require new physics in the complex formulation.

In general, in the realified formalism, the complex structure of quantum field theory is encoded in a real matrix \(\mathbb{J}\) satisfying \(\mathbb{J}^2 = -\mathbb{I}\). The \(\mathbb{J}\)-symmetry is the invariance of the Lagrangian under global \(\mathbb{J}\)-rotations:
\begin{equation}
\Xi \mapsto e^{\theta \mathbb{J}} \Xi, \label{eq:J_rotation}
\end{equation}
which corresponds to the usual \(U(1)\) phase symmetry \(\chi \mapsto e^{i\theta} \chi\) in the complex formulation. Let us note that the presence of an overall phase factor in quantum mechanics also generates a transformation analogous to \eqref{eq:J_rotation} and is related to the emergent property of spectral degeneracy in quantum mechanics \cite{Arefeva:2025zbx}.

The question we address here is: assume the \(\mathbb{J}\)-symmetry is broken. Are mass terms for fermions available in this case? The answer is positive, but with important nuances. In the realified Standard Model, the following mass terms are enabled by \(\mathbb{J}\)-symmetry breaking:

\begin{itemize}

\item The dimension-five Weinberg operator. The operator
\begin{equation}
\mathcal{L}_5 = -\frac{1}{2\Lambda} \kappa_{ij} (L_i H)(L_j H) + \text{h.c.} \label{eq:Weinberg_complex}
\end{equation}
 In the realified formalism, it becomes:
\begin{equation}
\mathcal{L}_5 = -\frac{1}{2\Lambda} \kappa_{ij} (\mathbb{L}_i^T \mathcal{C} \mathbb{H}) (\mathbb{L}_j^T \mathcal{C} \mathbb{H}). \label{eq:Weinberg_real}
\end{equation}
It is gauge-invariant but explicitly breaks the \(\mathbb{J}\)-symmetry.
After electroweak symmetry breaking, it generates a Majorana mass term for the left-handed neutrinos:
\begin{equation}
\mathcal{L}_{M,\nu} = -\frac{1}{2} (m_\nu)_{ij} \, \Xi_{\nu_i}^T \mathcal{C} \Xi_{\nu_j}, \qquad (m_\nu)_{ij} = \frac{v^2}{2\Lambda} \kappa_{ij}. \label{eq:neutrino_mass_real}
\end{equation}
Thus, \(\mathbb{J}\)-symmetry breaking together with electroweak symmetry breaking yields a Majorana mass for \(\nu_L\).

\item  In the seesaw mechanism, we add gauge-singlet right-handed neutrinos \(N_i \sim (1,1,0)\). In the realified formalism, these become real four-component spinors \(\Xi_{N_i}\). The mass terms are:
\begin{equation}
\mathcal{L}_{\text{mass}} = -\frac{1}{2} M_R \, \Xi_N^T \mathcal{C} \, \Xi_N - m_D \, \Xi_L^T \mathcal{C} \, \Xi_N  \label{eq:seesaw_real}
\end{equation}
Here $\Xi_L$  is the realified spinor for the left-handed neutrino $\nu_L$ and $\Xi_N$ is the realified spinor for the right-handed neutrino singlet $N$ (often denoted \(\nu_R\)).
The Dirac mass term \(m_D \Xi_L^T \mathcal{C} \Xi_N\) preserves the \(\mathbb{J}\)-symmetry if \(\Xi_L\) and \(\Xi_N\) transform in the same way.
The Majorana mass term \(\frac{1}{2} M_R \Xi_N^T \mathcal{C} \Xi_N\) breaks the \(\mathbb{J}\)-symmetry.

If \(\mathbb{J}\)-symmetry is broken, the Majorana mass term for the right-handed neutrino is allowed. This leads to the seesaw mechanism:
\begin{equation}
m_\nu \sim \frac{m_D^2}{M_R}. \label{eq:seesaw_mass}
\end{equation}
Thus, \(\mathbb{J}\)-symmetry breaking together with the seesaw mechanism leads to light Majorana neutrino masses.

\end{itemize}

Thus, \(\mathbb{J}\)-symmetry breaking is a necessary condition for Majorana mass terms in the realified formalism, but it must be combined with gauge invariance to produce physically consistent mass terms. This makes \(\mathbb{J}\)-symmetry a powerful diagnostic tool for identifying new physics in the realified Standard Model.

\newpage
\section{Conclusion: Discussion and Future Directions}

In this paper, the question of generalizing the quantum mechanical approach—which is formulated for a finite number of degrees of freedom and deals only with real numbers—to the infinite degrees of freedom case in quantum field theory is investigated. This question has two aspects. 
\begin{itemize}
\item The first concerns the quantization of a real scalar field: specifically, can we construct such a quantization without referring to $i$? In this case, the quantization generalizes \eqref{JQM} and is given by $\bJ$-quantization \eqref{CRJ}. 

\item The second aspect asks whether the theory of elementary particles can be reformulated or extended using only real fields.
\end{itemize}

As the answer to the first question, in this paper we have presented a formulation of scalar quantum field theory
entirely in terms of real structures. We call this theory RQFT. The main idea is to replace the external
complex unit \(i\) by an internal real matrix $\bJ$, satisfying
$
\bJ^2=-\id
$,
which plays the role of the complex structure on a real Kähler space. In this
way the usual complex Hilbert-space formulation of quantum field theory is
rewritten as a theory on a real Hilbert space equipped with a compatible symplectic structure.
\\

The construction of RQFT is equivalent to the standard complex formulation of QFT at the level
of physical observables. Its purpose is not to change the numerical predictions
of scalar QFT, but to formulate QFT entirely in real Kähler terms. This makes
the roles of the orthogonal, symplectic, and complex structures explicit and
provides a real-matrix version of perturbative tools such as the Feynman
propagator, the Sokhotski--Plemelj formula, the Cutkosky rules, and the
optical theorem.
\\

We have constructed explicitly the free scalar field operator
$
\hat\bPhi(t,\bx)
$
acting in the real Kähler analogue of the usual bosonic Fock space. This field
satisfies the Klein--Gordon equation and has the canonical commutation relation 
\be
[\hat\bPhi(t,\bx),\partial _t\hat\bPhi(t,\by)]
=
\bJ\,\delta^{(3)}(\mathbf x-\mathbf y).
\ee
Thus the ordinary factor \(i\) in the canonical commutation relations is
replaced by the real symplectic matrix $\bJ$. This shows that the
standard canonical structure of scalar quantum field theory can be formulated
without introducing complex numbers as fundamental quantity.
\\

We have also introduced the corresponding 
$\bJ$-Wightman functions.
For the two-point function we found the direct analogue of the standard
positive-frequency Wightman distribution,
\be
\bW_2(x;y;m)
=
\int \frac{d^3p}{(2\pi)^3\,2\omega_{\mathbf p}}\,
e^{-\bJ p\cdot(x-y)} .
\ee
For the massless theory this gives the distribution
\be
\bW_2(x;y;0)
=
\frac{1}{4\pi^2}
\frac{1}{-(x-y)^2+\bJ\,0\,(x^0-y^0)},
\ee
for the massive theory we obtained
\be
\bW_2(x;y;m)
=
\frac{m}{4\pi^2}
\frac{
K_1\!\left(m
\sqrt{-(x-y)^2+\bJ\,0\,(x^0-y^0)}\right)
}{
\sqrt{-(x-y)^2+\bJ\, 0\,(x^0-y^0)^0}
}.
\ee
The spacelike, timelike and light-cone regimes were described explicitly. In
particular, the leading light-cone singularity of the massive Wightman function
coincides with the massless Wightman distribution, while the full massive
function contains additional mass-dependent logarithmic terms. This agrees with
the standard local singularity structure of relativistic quantum fields.
\\

We have then constructed the 
\(\bJ\)-Feynman Green function and computed
its \(\bJ\)-Fourier transform. The resulting momentum-space propagator has
the form
\be
\tilde{\bG}_F^{\,\bJ}(p)
=
\bJ\,\cP\Big(\frac{1}{p^2-m^2}
\Big)+
\pi\delta(p^2-m^2).
\ee
This formula is the real Kähler analogue of the usual Feynman prescription. It
also gives a natural \(\bJ\)-version of the Sokhotski--Plemelj formula.
The imaginary part of the usual complex propagator is replaced by the
\(\bJ\)-part of the real matrix-valued propagator.
\\

A central point of the construction of RQFT is the reformulation of unitarity. In the
ordinary complex theory the scattering matrix satisfies
$
S^\dagger S=\mathbf I$.
In the real Kähler formulation this condition is rewritten as the simultaneous
orthogonality and symplecticity of the scattering operator,
\be
S^T S=\id,
\qquad
S^T\bJ S=\bJ .
\ee
For orthogonal operators the second condition is equivalent to $[S,\bJ]=0$.
Thus the usual unitary structure of quantum field theory is equivalently
described by an orthogonal-symplectic structure on the corresponding real
Kähler space.
\\

We have also discussed observables in real quantum field theory. The physical
observables, such as masses, decay rates, scattering probabilities and cross
sections, are real quantities and coincide with the corresponding quantities in
the usual complex formulation. Therefore the real Kähler formulation does not
change the numerical predictions of ordinary scalar quantum field theory.
Rather, it gives an equivalent description in which the complex structure is
made explicit as a real symplectic operator.
\\

As an illustration, we considered the one-loop contribution to the \(2\to2\)
scattering amplitude in \(\bPhi^4\) theory. For \(s>4m^2\), the two
internal particles can go on shell, and the \(\bJ\)-part of the loop
amplitude is
\be
\bM_{\bJ,1-loop}
(s)
=
\frac{\lambda^2}{32\pi}
\sqrt{1-\frac{4m^2}{s}} .
\ee
This is the exact analogue of the imaginary part of the usual one-loop
amplitude. The corresponding relation is naturally interpreted as the
$\bJ$-optical theorem. In this way, the Cutkosky rules also acquire a
real Kähler form: the discontinuity of a loop diagram is expressed through its
\(\bJ\)-part, while the cut propagators are replaced by the standard
on-shell phase-space factors.
\\

The results obtained in this paper show that perturbative scalar quantum field theory
can be formulated completely in real terms. The replacement
$ 
i\longmapsto \bJ
$
is not merely a formal rule, but reflects the passage from a complex Hilbert
space to an equivalent real Kähler space. In this formulation, the three
structures
$g,\, \omega,\, \bJ
$
respectively encode the real scalar product, the symplectic form and the
complex structure. Standard QFT is recovered after identifying the action of
\(\bJ\) with multiplication by \(i\).
\\

It would be natural to extend the present
construction to spinor and gauge fields, where internal symmetries and
constraints play an essential role. Another important problem is to develop the
renormalized perturbation theory fully in the 
\(\bJ\)-formalism and to
study how anomalies, Ward identities and gauge fixing are represented in real
Kähler language. Finally, since the real formulation makes the symplectic
structure explicit, it may be useful for comparing quantum field theory with
geometric and gravitational frameworks, where the fundamental variables are
usually real.
\\

As the investigation of the second aspect of realification of QFT models, we have studied the realification of  the Standard Model of elementary particles.
Our choice of the Standard Model for testing purely real formulations stems from its unique status. As a realistic theory of fundamental interactions, it warrants rigorous scrutiny from both phenomenological and conceptual standpoints.
\\

The internal gauge representations of all Standard Model matter fields and of the Higgs field are complex unitary representations. Whenever these fields transform nontrivially under the non-Abelian factors of the Standard Model gauge group, they belong to fundamental, conjugate-fundamental, or singlet representations of the corresponding subgroups. This structure is strongly supported by experiment. In particular, experimental data support the assignment of quarks to the fundamental color representation $\mathbf{3}$ of $SU(3)_c$, left-handed fermions to the fundamental weak representation $\mathbf{2}$ of $SU(2)_L$, and right-handed fermions to weak singlets. These assignments are inferred from measured particle multiplicities, charged- and neutral-current couplings, decay rates, jet radiation patterns, and the experimentally determined group-theoretic color factors.
 The Higgs data strongly support a dominant
$(\mathbf{1},\mathbf{2})_{1/2}$ component.
\\

After replacing each complex representation space by its underlying real vector space, the corresponding unitary representations become simultaneously orthogonal and symplectic and preserve the complex structure $\bJ$. Thus, in a real formulation, one may say that the Standard Model matter and Higgs fields carry representations that are both orthogonal and symplectic. More precisely, this ortho-symplectic structure arises through the realification of the conventional complex unitary representations.
\\

In conclusion, we have established that Standard Model  can be consistently formulated over the real numbers without any loss of physical content, providing a unified geometric framework that bridges quantum theory and classical symplectic geometry.
Whether this geometric reformulation ultimately offers new physical predictions remains unknown, but it provides a coherent mathematical language that may prove valuable, especially in contexts where real geometric structures are already fundamental, such as geometric quantization or some approaches to quantum gravity.
\\

Given constructed in this paper the realification of the Standard Model, it seems that there are the following  promising directions that go beyond simply rewriting the Standard Model. 
\begin{itemize}
\item
One could investigate whether the orthogonal-symplectic formulation
$U(N)\cong O(2N)\cup Sp(2N,R)$
reveals hidden geometric structures of gauge theories that are not as transparent in the complex formulation. For example, one could study whether renormalization, anomalies, BRST symmetry, or supersymmetry admit more natural geometric interpretations in terms of the Kähler structure $(g,\Omega,\bJ)$. Such results would constitute genuine conceptual advances rather than mere reformulations, and would considerably strengthen the motivation for the real approach.
\item
Formulating the Standard Model entirely in real terms unveils profound structural insights. It highlights the theory's inherent connections to orthogonal groups (e.g., SO(10)) and symplectic groups, as well as to more general real algebraic structures. Such a formulation is especially advantageous for model-building beyond the Standard Model—most notably in Grand Unified Theories (GUTs) and supersymmetric (SUSY) extensions, where real representations frequently provide a more transparent and unified starting point.
\item 
Any complex theory can be rewritten over the reals by doubling dimensions, resulting in a formulation with an intrinsic $\bJ$-symmetry. The real reformulation naturally allows for the breaking of this symmetry. While in the Standard Model the $\bJ$ symmetry is preserved, its breaking would signal new physics beyond it. Indeed, from the perspective of a fundamental real-field approach, there is no requirement that the number of real fields be even.  A complex theory—realified as an even-degree-of-freedom system—can be embedded into an extended theory with an odd number of real degrees of freedom.
Therefore,  $\bJ$-symmetry breaking is a natural signal of beyond the Standard Model physics.
\item 
 In the complex formulation, the mass matrix is diagonalized as
$
U_{fL}^\dagger M_f U_{fR} = M_f^{\mathrm{diag}},
$
with nonnegative real entries. The realified unitary matrices 
$\mathbb{U}_{fL}=\mathcal{R}(U_{fL})$ and $\mathbb{U}_{fR}=\mathcal{R}(U_{fR})$ 
are orthogonal and symplectic, yielding
$
\mathbb{U}_{fL}^T \mathbb{M}_f \mathbb{U}_{fR}
=
\mathcal{R}(M_f^{\mathrm{diag}}).
$
Since $M_f^{\mathrm{diag}}$ is real,
\[
\mathcal{R}(M_f^{\mathrm{diag}})
=
\begin{pmatrix}
M_f^{\mathrm{diag}} & 0\\
0 & M_f^{\mathrm{diag}}
\end{pmatrix},
\]
so each physical fermion mass appears twice. This doubling is purely a consequence of representing one complex degree of freedom by two real coordinates; it is not a physical doubling, unlike the spectral doubling in real quantum mechanics \cite{Arefeva:2025zbx}. If $\mathbb{J}$-symmetry is broken, these paired masses can split, and the corresponding fermions must be treated as independent degrees of freedom.

\item The real formalism has the potential to yield genuinely new theories, for instance, by promoting 
$\bJ$
to a dynamical field 
$\bJ(x)$ \cite{Giardino:2023uzp}, particularly in the presence of gravitational interactions.
\item Let us also note that, in the context of the $C^*$-
algebraic approach to QM and QFT, real 
$C^*$-algebras have structures that are essentially different from their complex counterparts \cite{Rosenberg:2015kia}.

Let us also note that, in the context of the 
$C^*$-algebras have structures that are essentially different from their complex counterparts \cite{Rosenberg:2015kia}.
\end{itemize}

In the end let us  list general potential advantages to formulating the quantum models
entirely over the real numbers.
\begin{itemize}
\item
{\it Closer compatibility with classical field theory}.
Classical field theory is ordinarily formulated over $\bR$.
In standard quantization, the imaginary unit
$i$
appears explicitly in commutation relations, propagators, the Schr\"odinger
equation, and the path-integral phase.

In the pure real formulation, both classical and quantum theories are written
over the real numbers. The distinction is encoded in the additional geometric
operator
$\bJ$.
Thus one may say that 
classical theory is over $\bR$ is mapped 
to quantum theory over $(\bR,\bJ)$.
This may make the passage from classical to quantum theory conceptually more
continuous. In this context it would be fruitful to consider the Hamiltonian approach \cite{VVK-OGS}.
\item {\it Numerical advantage}. Numerical calculations are ultimately performed using real floating-point
numbers, even when complex arithmetic is used at a higher level. However, no general computational improvement is guaranteed.  Numerical advantages, if any,  must be tested in concrete algorithms rather
than assumed.
\item {\it Lattice gauge theory}. For lattice gauge theory, real block-matrix representations may sometimes be
useful for implementing unitary transformations as orthogonal transformations.

\item {\it Possible extensions to other number systems}.
The usual quantum theories are formulated over
$\bC$. The real formulation instead uses
$
(\bR,\bJ),\,
\bJ^2=-\bI.
$
This separation between the scalar field and the complex-structure operator may
be useful when considering more general mathematical settings, such as
p-adic theories \cite{VVZ},
non-Archimedean models,
finite-field analogues,
or other algebraic generalizations.
Whether a physically meaningful Standard Model can be constructed in such
settings remains an open problem.
\end{itemize}
$\,$
\\

We can hope that the real formulation and its associated ortho-symplectic representations can become a usable tool in modern theoretical particle physics, especially in non-perturbative and unification-oriented researchs.

\section*{Acknowledgment}
 We are grateful to  D. Ageev and A. Trushechkin for useful discussions and remarks.
This work is supported by the Russian Science Foundation (24-11-00039, Steklov Mathematical Institute).
\newpage
\appendix
\newpage
\section{Isomorphism between $\bC$ and $\bR^2$}\label{sec:C-R2}

We define the explicit isomorphism between $\bC$ and $\bR^2$ by
\be\label{u+iv}
\cR(u+iv)=\begin{pmatrix}
  u\\v  
\end{pmatrix},\ee
with inverse $\gamma\begin{pmatrix} u\\v \end{pmatrix}=u+iv$.

Under this isomorphism, the imaginary unit $i$ is mapped to the matrix
\be\label{Ri}
\cR(i)=\bJ,\ee 
where
\be\label{bJ}
\bJ=\begin{pmatrix}
    0&-1\\1&0
\end{pmatrix},\qquad \bJ^2=-I.\ee
To preclude any confusion between \eqref{u+iv} and \eqref{Ri}, we assume that the $i$ in \eqref{Ri} is multiplied by the identity operator.

\begin{lemma}\label{lemma:1}
For $\alpha\in \bR$ and $u,v\in\bR$, we have
\be \gamma\left( e^{\bJ\alpha} \begin{pmatrix} u\\v \end{pmatrix} \right)=e^{i\alpha}(u+iv),\ee
where $\bJ$ is given in \eqref{bJ}.
\end{lemma}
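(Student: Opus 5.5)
The plan is to compute $e^{\bJ\alpha}$ explicitly as a real $2\times 2$ matrix and then apply $\gamma$ directly.

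First, since $\bJ^2=-\id$, the even powers are $\bJ^{2k}=(-1)^k\id$ and the odd powers are $\bJ^{2k+1}=(-1)^k\bJ$. Splitting the exponential series $e^{\bJ\alpha}=\sum_{n\ge0}\alpha^n\bJ^n/n!$ into even and odd terms, which is legitimate because the series converges absolutely in any matrix norm, gives
\be
e^{\bJ\alpha}=\cos\alpha\,\id+\sin\alpha\,\bJ=\begin{pmatrix}\cos\alpha&-\sin\alpha\\ \sin\alpha&\cos\alpha\end{pmatrix},
\ee
using the explicit form \eqref{bJ} of $\bJ$.

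Second, acting on $\begin{pmatrix}u\\v\end{pmatrix}$ gives the vector $\begin{pmatrix}u\cos\alpha-v\sin\alpha\\ u\sin\alpha+v\cos\alpha\end{pmatrix}$. Applying $\gamma$ then yields $(u\cos\alpha-v\sin\alpha)+i(u\sin\alpha+v\cos\alpha)$.

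Third, expanding $e^{i\alpha}(u+iv)=(\cos\alpha+i\sin\alpha)(u+iv)$ by Euler's formula gives the same real and imaginary parts, which proves the lemma.

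An alternative route avoids trigonometry. One notes that $\gamma$ is $\bR$-linear and that $\gamma\big(\bJ\begin{pmatrix}u\\v\end{pmatrix}\big)=\gamma\begin{pmatrix}-v\\u\end{pmatrix}=-v+iu=i(u+iv)$. Thus $\gamma\circ\bJ=i\cdot\gamma$, so $\gamma\circ\bJ^n=i^n\gamma$, and summing the series termwise (using continuity of $\gamma$) gives the claim. I do not expect a real obstacle in either route. The only point requiring care is to check that the sign convention of $\bJ$ in \eqref{bJ} matches multiplication by $+i$, rather than $-i$, under $\cR$. The intertwining computation $\gamma\circ\bJ=i\gamma$ settles this.
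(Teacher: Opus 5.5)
Your proposal is correct and follows essentially the paper's argument: the paper derives the lemma from the series expansion of $e^{\bJ\alpha}$, equivalently from its form as the rotation matrix $\cos\alpha\,\id+\sin\alpha\,\bJ$, which is exactly your first route. Your second route, via the intertwining identity $\gamma\circ\bJ=i\,\gamma$ and termwise summation of the series, is just the series form of the same computation, and it has the added merit of confirming that the sign convention in \eqref{bJ} corresponds to multiplication by $+i$.
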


The proof follows immediately from the series expansion of the exponential, or equivalently, from the standard representation of $e^{\bJ\alpha}$ as the rotation matrix
\[
e^{\bJ\alpha} =
\begin{pmatrix}
\cos\alpha & -\sin\alpha \\
\sin\alpha & \cos\alpha
\end{pmatrix}.
\]
\section{Real and Complex Fock Spaces. }\label{sec:KFock}

\subsection{Mapping $\gamma$ between Real and complex Fock spaces}\label{subsec:map}
Let $L^2_{\mathbb R}(\mathbb R^d)$ and $L^2_{\mathbb C}(\mathbb R^d)$ denote the spaces of square-integrable real- and complex-valued functions on $\mathbb R^d$, respectively.
Since any $f\in L^2_{\bC}(\bR^d )$ can be written as 
$
f=u+iv,$
where $u,v \in L^2_{\bR}(\bR^d )$, there exists a natural mapping 
\be\label{u,v}
\gamma \begin{pmatrix}
  u\\v  
\end{pmatrix}=u+iv,\ee
and
\be \gamma: L^2_{\bR}(\bR^d )\oplus  L^2_{\bR}(\bR^d )\to L^2_{\bC}(\bR^d ),
\ee
i.e.
$$L^2_{\mathbb C}(\mathbb R^d )
\simeq
 L^2_{\mathbb R}(\mathbb R ^d)\otimes_{\mathbb R}\mathbb C
 \simeq
 L^2_{\mathbb R}(\mathbb R )\oplus L^2_{\mathbb R}(\mathbb R ).$$

The complex and real bosonic Fock spaces are defined by
\[
\mathcal F_{\mathbb C}
\left(
L^2_{\mathbb C}(\mathbb R^d )
\right)
=
\bigoplus_{n=0}^{\infty}
\left(
L^2_{\mathbb C}(\mathbb R ^d)
\right)^{\otimes^n_{\mathbb C,\mathrm{Sym}}},
\qquad
\mathcal F_{\mathbb R}
\left(
L^2_{\mathbb R}(\mathbb R ^d)
\right)
=
\bigoplus_{n=0}^{\infty}
\left(
L^2_{\mathbb R}(\mathbb R^d )\right)^{\otimes^n_{\mathbb R,\mathrm{Sym}}}.
\]
Because complexification commutes with symmetric tensor products, we have
\[
\mathcal F_{\mathbb C}
\left(
L^2_{\mathbb C}(\mathbb R^d )
\right)
\simeq
\mathcal F_{\mathbb R}
\left(
L^2_{\mathbb R}(\mathbb R^d )
\right)
\otimes_{\mathbb R}\mathbb C .
\]
Hence, as real Hilbert spaces,
\be\label{isoH}
\mathcal F_{\mathbb C}
\left(
L^2_{\mathbb C}(\mathbb R ^d)
\right)
\simeq
\mathcal F_{\mathbb R}
\left(
L^2_{\mathbb R}(\mathbb R^d )
\right)
\oplus
\mathcal F_{\mathbb R}
\left(
L^2_{\mathbb R}(\mathbb R^d )
\right).
\ee

We define the doubled real Fock space
\[
\bF_{\mathbb R}
\equiv
\mathcal F_{\mathbb R}
\left(
L^2_{\mathbb R}(\mathbb R^d )
\right)
\oplus
\mathcal F_{\mathbb R}
\left(
L^2_{\mathbb R}(\mathbb R^d )
\right).
\]
To encode the complex structure, we introduce on $\bF_{\mathbb R}$ the operator
\begin{equation}
\bJ \begin{pmatrix}
  U\\V  
\end{pmatrix}
=
\begin{pmatrix}
  -V\\U 
\end{pmatrix},
\qquad
\bJ^2=-I,
\label{bJ_main}
\end{equation}
which is represented by the $2\times 2$ matrix
\[
\bJ=
\begin{pmatrix}
  0&-1\\1&0 
\end{pmatrix}.
\]
This corresponds precisely to multiplication by $i$ under the identification $U+iV$.

Equipped with $\bJ$, the space $\bF_{\mathbb R}$ becomes a Kähler vector space. For
\be\bPsi^{i}=
\begin{pmatrix}
  U^{i}\\V^{i}  
\end{pmatrix},\quad i=1,2,
\ee
we define the real scalar product
\be
g(\bPsi^{1},\bPsi^{2})
=
(U^{1},U^{2})_{\mathcal F_{\mathbb R}}
+
(V^{1},V^{2})_{\mathcal F_{\mathbb R}},
\ee
and the symplectic form
\be
\omega(\bPsi^{1},\bPsi^{2})
=
(V^{1},U^{2})_{\mathcal F_{\mathbb R}}
-
(U^{1},V^{2})_{\mathcal F_{\mathbb R}},
\ee
where $(\cdot,\cdot)_{\mathcal F_{\bR}}$ denotes the inner product on $\mathcal F_{\mathbb R}$.
We denote this Kähler Fock space by
\[
\mathcal K\mathcal F_{\mathbb R}
\equiv
\left(
\bF_{\mathbb R}, \bJ
\right).
\]
Consequently, we have the complex Hilbert space isomorphism
\begin{equation}
\mathcal F_{\mathbb C}
\left(
L^2_{\mathbb C}(\mathbb R ^d)
\right)
\simeq
\mathcal K\mathcal F_{\mathbb R}
\left(
L^2_{\mathbb R}(\mathbb R ^d)
\right).
\label{main_iso}
\end{equation}

The explicit realization of the isomorphism \eqref{main_iso}, including the component-wise decomposition $\Psi = U + i V$, is relegated to next subsection~\ref{app:Fock_details} for completeness.

\subsection{Explicit Form of the Real and Complex Fock Space \\Isomorphism}\label{app:Fock_details}
 Here we spell out the explicit maps realizing the isomorphism \eqref{main_iso}. 
Defining \be\label{B1}\cR(\Psi) = \bPsi =\begin{pmatrix}
  U\\V  
\end{pmatrix} \quad \mbox{and} \quad \gamma(\bPsi) = U+iV,
\ee 
we can realized the isomorphism \eqref{isoH}   by the map $\gamma$
\be
\cR:\mathcal F_{\mathbb C}
\left(
L^2_{\mathbb C}(\mathbb R ^d)
\right)
\to
\mathcal F_{\mathbb R}
\left(
L^2_{\mathbb R}(\mathbb R ^d )
\right)
\oplus
\mathcal F_{\mathbb R}
\left(
L^2_{\mathbb R}(\mathbb R ^d )
\right),
\ee
and
\be
\gamma:
\mathcal F_{\mathbb R}
\left(
L^2_{\mathbb R}(\mathbb R ^d )
\right)
\oplus
\mathcal F_{\mathbb R}
\left(
L^2_{\mathbb R}(\mathbb R ^d )
\right)
\to
\mathcal F_{\mathbb C}
\left(
L^2_{\mathbb C}(\mathbb R ^d)
\right).
\ee
These maps are inverse to each other, and $\gamma$ intertwines multiplication by $i$ with the action of $\bJ$ defined in \eqref{bJ_main}. This establishes the complex Hilbert space isomorphism
\[
\mathcal F_{\mathbb C}
\left(
L^2_{\mathbb C}(\mathbb R ^d)
\right)
\simeq
\mathcal K\mathcal F_{\mathbb R}
\left(
L^2_{\mathbb R}(\mathbb R ^d)
\right).
\]
Explicitly, let
$
\Psi\in
\mathcal F_{\mathbb C}
\left(
L^2_{\mathbb C}(\mathbb R ^d)
\right).
$
Then
$
\Psi=(\Psi_0,\Psi_1,\Psi_2,\ldots)$,
where $
\Psi_n\in
L^2_{\bC,\mathrm{sym}}
\left(
\mathbb R ^n
\right), \, n=0,1,2\ldots$.
Each \(n\)-particle wave function can be written as
\[
\Psi_n(\bk_1,\ldots,\bk_n)
=
U_n(\bk_1,\ldots,\bk_n)
+
iV_n(\bk_1,\ldots,\bk_n),
\]
where
\[
U_n,V_n\in
L^2_{\mathrm{sym}}
\left(
\mathbb R ^n;\mathbb R
\right).
\]
Thus
\[
\Psi=U+iV,
\]
where
\[
U=(U_0,U_1,U_2,\ldots),
\qquad
V=(V_0,V_1,V_2,\ldots),
\]
and
\[
U,V\in
\mathcal F_{\mathbb R}
\left(
L^2_{\mathbb R}(\mathbb R ^d)
\right).
\]
Therefore the isomorphism is given  by $\cR$ 
with the inverse given by $\gamma$ as defined in \eqref{B1}.

\newpage
\section{$\bJ$-Fourier Transform }\label{app:JFT}
We define the $\bJ$-Fourier transform by formula 
\be\label{KFc}
\begin{pmatrix}
 \tilde{f}^{\bJ}_1(p)  \\ \tilde{f}^{\bJ}_2(p)
\end{pmatrix}= \int dx\, e^{-\bJ p x}\,\begin{pmatrix}
 f_1(x)  \\ f_2(x)
\end{pmatrix} 
\ee
where $\bJ = \begin{pmatrix}0&-1\\1&0\end{pmatrix}$ satisfies $\bJ^2 = -\id$; or in short   
\bea\label{KF}
\tilde{f}^{\bJ}(p) = \int dx\, e^{-\bJ p x}\,f(x), \\ f(x): \bR\to \bR^2, \quad \tilde{f}^{\bJ}(p): \bR\to  \bR^2
\eea
\\

The inverse Fourier transform is
\be\label{IKF}
f(x) = \frac{1}{2\pi}\int dp\, e^{\bJ p x}\,\tilde{f}^{\bJ}(p)\, \quad f,\tilde{f}^{\bJ} \in \bR^2
\ee

To prove the formula \eqref{IKF} we have to prove that
\be\label{D4}
\frac1{2\pi}\int dp e^{\bJ p (x_1-x_2)}=\id \,\delta (x_1-x_2)\ee

We want to prove
\[
\frac{1}{2\pi}\int_{-\infty}^{\infty} dp\; e^{\bJ p a} = \id\,\delta(a),
\]
where \(a = x_1-x_2\), \(\bJ = \begin{pmatrix}0&-1\\1&0\end{pmatrix}\) satisfies \\\(\bJ^2 = -\id\), and the integral is understood in the sense of distributions.

Using the identity \(e^{\bJ \theta} = \id\cos\theta + \bJ\sin\theta\) (which follows from the power series expansion because \(\bJ^{2n}=(-1)^n\id\) and \(\bJ^{2n+1}=(-1)^n\bJ\)), we have
\[
e^{\bJ p a} = \id \cos(p a) + \bJ \sin(p a).
\]

Thus,
\bea
&&\frac{1}{2\pi}\int_{-\infty}^{\infty} dp\; e^{\bJ p a}\\\nn
&&= \id \cdot \frac{1}{2\pi}\int_{-\infty}^{\infty} dp\;\cos(p a) \;+\; \bJ \cdot \frac{1}{2\pi}\int_{-\infty}^{\infty} dp\;\sin(p a).
\eea

The integral of \(\sin(p a)\) over a symmetric interval vanishes because it is an odd function:
\[
\int_{-\infty}^{\infty} \sin(p a)\,dp = 0.
\]

For the cosine integral, we recall the Fourier representation of the Dirac delta:
\[
\int_{-\infty}^{\infty} e^{i p a}\,dp = 2\pi\,\delta(a).
\]
Since \(e^{i p a} = \cos(p a) + i\sin(p a)\) and the sine integral is zero, we obtain
\[
\int_{-\infty}^{\infty} \cos(p a)\,dp = 2\pi\,\delta(a).
\]

Therefore,
\[
\frac{1}{2\pi}\int_{-\infty}^{\infty} \cos(p a)\,dp = \delta(a).
\]

Substituting back,
\[
\frac{1}{2\pi}\int_{-\infty}^{\infty} dp\; e^{\bJ p a} = \id\,\delta(a) + \bJ \cdot 0 = \id\,\delta(a).
\]

Thus the desired formula holds:
\be
\frac{1}{2\pi}\int_{-\infty}^{\infty} dp\; e^{\bJ \,p (x_1-x_2)} = \id\,\delta(x_1-x_2).
\ee

\subsection {$\bJ$-Fourier Transform of a Schwartz Distribution }\label{app:JFT1}
\begin{definition}[{\it Schwartz space}]
\it Let \(\mathcal{S}(\mathbb{R}^n)\) denote the space of all \(C^\infty\) functions \(\phi:\mathbb{R}^n\to\mathbb{C}\) such that for all multi-indices \(\alpha,\beta\in\mathbb{N}_0^n\)
\[
\sup_{x\in\mathbb{R}^n} |x^\alpha \partial^\beta \phi(x)| < \infty.
\]
\end{definition}
It is known that \cite{GS,VVS} the Fourier transform is an isomorphism of \(\mathcal{S}(\mathbb{R}^4)\).

\begin{theorem}\label{th:1}

Let $\bF=\begin{pmatrix}
    f_1\\f_2
\end{pmatrix}$ and each component
$f_i\in\cS(\bR^4)$, i=1,2. Then each component of the 2-component vector 
\be\label{Jft}
\tilde \bF^\bJ(k^0,\mathbf{k}) = \int_{R^4} e^{-\bJ(k^0 t - \mathbf{k}\cdot\mathbf{x})}\, \bF(t,\bx)\, d^4x
\ee
belongs to \(\mathcal{S}(\mathbb{R}^4)\).
\end{theorem}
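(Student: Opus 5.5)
The plan is to reduce the $\bJ$-Fourier transform to the ordinary complex Fourier transform on $\cS(\bR^4)$, using the fact quoted just before the theorem that the Fourier transform is an isomorphism of $\cS(\bR^4)$.

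First I write the matrix exponential explicitly. Put $\theta = k^0 t - \mathbf{k}\cdot\mathbf{x}$. As in the proof of \eqref{D4}, and as in Lemma \ref{lemma:1}, we have $e^{-\bJ\theta}=\id\cos\theta-\bJ\sin\theta$. Substituting this into \eqref{Jft} gives the two components
\begin{align*}
\tilde F_1(k)&=\int_{\bR^4}\big(f_1(x)\cos\theta+f_2(x)\sin\theta\big)\,d^4x,\\
\tilde F_2(k)&=\int_{\bR^4}\big(f_2(x)\cos\theta-f_1(x)\sin\theta\big)\,d^4x .
\end{align*}
The integrals converge absolutely, since $f_i\in\cS(\bR^4)\subset L^1$.

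Next I express these through the complex Fourier transform $\hat f(k)=\int e^{-i\theta}f(x)\,d^4x$, taken with the Minkowski pairing. This differs from the standard Euclidean Fourier transform only by the linear change of variables $\mathbf{k}\mapsto-\mathbf{k}$, and that change preserves $\cS(\bR^4)$. Cosine and sine are combinations of exponentials:
\[
\int f\cos\theta=\tfrac12\big(\hat f(k)+\hat f(-k)\big),\qquad \int f\sin\theta=\tfrac{i}{2}\big(\hat f(k)-\hat f(-k)\big).
\]
Equivalently, one can pass through the map $\gamma$: the complex function $\gamma(\tilde\bF^\bJ)$ equals $\widehat{(f_1+if_2)}$, and $\tilde F_1,\tilde F_2$ are its real and imaginary parts.

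The conclusion then follows in three steps. Since $f_i\in\cS$, each $\hat f_i$ lies in $\cS$. Reflection $k\mapsto -k$ preserves $\cS$. Finite linear combinations stay in $\cS$, and so do real and imaginary parts, because the Schwartz seminorms are defined through $|\cdot|$ and complex conjugation commutes with $\partial^\beta$. Hence $\tilde F_1$ and $\tilde F_2$ belong to $\cS(\bR^4)$. The components are in fact real-valued, and this is consistent with the definition of $\cS$, which allows complex-valued functions.

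I do not expect a serious obstacle. The only points needing care are the identification of $e^{-\bJ\theta}$ with $\cos$ and $\sin$, and the sign bookkeeping in the Minkowski phase. If one prefers a self-contained argument instead of citing the isomorphism theorem, the remaining work is the standard estimate. Differentiating under the integral gives $k^\alpha\partial_k^\beta\tilde F_j$ as the integral of a trigonometric factor times $\partial_x^\alpha(x^\beta f_i)$, up to signs, by integration by parts. This is bounded by the $L^1$ norms of $\partial^\alpha(x^\beta f_i)$, which are finite because $f_i\in\cS$.
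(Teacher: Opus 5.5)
Your proof is correct and follows the route the paper has in mind. The paper writes out no argument for this theorem beyond citing that the ordinary Fourier transform is an isomorphism of $\cS(\bR^4)$ and calling the theorem its analog; your decomposition $e^{-\bJ\theta}=\id\cos\theta-\bJ\sin\theta$ (equivalently $\gamma(\tilde{\bF}^{\bJ})=\widehat{f_1+if_2}$) is exactly the reduction that makes that analogy rigorous, with one caveat: the ``real and imaginary parts'' version presumes $f_1,f_2$ real-valued (as intended in the $\bJ$-setting), whereas your $\hat f(\pm k)$ version works even for complex-valued $f_i$.
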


Hence, the $\bJ$-Fourier transform is an isomorphism of $\cS(\bR^4)\oplus \cS(\bR^4)$ onto itself. This theorem is the analog of the theorem according which the usual Fourier transform is an isomorphism of \(\mathcal{S}(\mathbb{R}^4)\).
\subsection{The restriction of the $\bJ$-Fourier Transform to the Mass Shell}\label{app:JFT2}
It is known, see \cite{Hormander}, the following theorem
\begin{theorem}\label{th:2}
 If $f\in\cS(\bR^4)$ then the function
\be
\tilde f(\mathbf{k}) = \int_{\mathbb{R}^4} f(t,\mathbf{x})\, e^{-i(\omega_{\mathbf{k}} t - \mathbf{k}\cdot\mathbf{x})}\, d^4x
\ee
belongs to \(\mathcal{S}(\mathbb{R}^3)\).
\end{theorem}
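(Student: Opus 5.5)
The plan is to reduce the statement to the classical fact (recalled just before Theorem~\ref{th:1}) that the four-dimensional Fourier transform maps $\cS(\bR^4)$ onto itself, and then to study the composition of a Schwartz function with the smooth embedding of the mass shell. Put
\[
\hat f(k^0,\bk)=\int_{\bR^4} f(t,\bx)\,e^{-i(k^0 t-\bk\cdot\bx)}\,d^4x ,
\]
so that $\hat f\in\cS(\bR^4)$ and
\[
\tilde f(\bk)=\hat f(\omega_{\bk},\bk),\qquad \omega_{\bk}=\sqrt{\bk^2+m^2}.
\]
Thus the whole task is to show that $g\mapsto g\circ\Phi$, with $\Phi(\bk)=(\omega_{\bk},\bk)$, maps $\cS(\bR^4)$ into $\cS(\bR^3)$.

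\textbf{Regularity of $\omega_{\bk}$.} I will first establish the needed properties of $\omega_{\bk}$, assuming $m>0$.
\begin{itemize}
\item $\omega_{\bk}$ is $C^\infty$ on $\bR^3$, because $\bk^2+m^2\geq m^2>0$.
\item $\omega_{\bk}$ is homogeneous-like of degree one at infinity. Concretely, $|\partial^\beta\omega_{\bk}|\le C_\beta(1+|\bk|)^{1-|\beta|}$, which I will prove by induction on $|\beta|$ from $\partial_j\omega=k_j/\omega$. In particular, every derivative of order $\ge1$ is bounded.
\item $|\Phi(\bk)|\ge |\bk|$, so any weight $(1+|\bk|)^N$ is dominated by $(1+|\Phi(\bk)|)^N$.
\end{itemize}

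\textbf{Seminorm estimate.} Next I apply the multivariate chain rule (Faà di Bruno) to $\partial^\beta_{\bk}\big[\hat f(\omega_{\bk},\bk)\big]$. It is a finite sum of terms of the form
\[
(\partial^{\gamma}\hat f)(\Phi(\bk))\times\prod_{\ell}\partial^{\beta_\ell}\omega_{\bk},\qquad |\gamma|\le|\beta|,\quad |\beta_\ell|\ge1 .
\]
By the regularity step, each product of derivatives of $\omega_{\bk}$ is bounded by a constant. Therefore
\[
\sup_{\bk}|\bk^\alpha\partial^\beta\tilde f(\bk)|\le C\sum_{|\gamma|\le|\beta|}\ \sup_{p\in\bR^4}(1+|p|)^{|\alpha|}|\partial^\gamma\hat f(p)|<\infty .
\]
This shows $\tilde f\in\cS(\bR^3)$, and moreover that $f\mapsto\tilde f$ is continuous. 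The same argument applied to $e^{+i(\cdots)}$ gives $\tilde f^*\in\cS(\bR^3)$.

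\textbf{Main obstacle.} The main obstacle is the bookkeeping in the chain rule together with the uniform bounds on the derivatives of $\omega_{\bk}$. The crucial point is that the assumption $m>0$ is genuinely needed. For $m=0$ the function $\omega_{\bk}=|\bk|$ is not smooth at $\bk=0$, so $\tilde f$ is in general only continuous there and the statement fails as written. I would therefore state the result for $m>0$, and for $m=0$ record only the rapid decay of $\tilde f$ and its smoothness on $\bR^3\setminus\{0\}$, which follow from the same estimates away from the origin.
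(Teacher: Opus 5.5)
Your argument is correct and follows the same route as the paper. The paper simply cites H\"ormander for this statement, and in the proof of Theorem~\ref{th:3} it sketches exactly this restriction of the four-dimensional Fourier transform to the mass shell; your symbol bounds on $\omega_{\bk}$ and the Fa\`a di Bruno estimate supply the details it omits.

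Your $m=0$ caveat is also right, and it corrects the paper's claim that the ``almost smooth'' light cone still gives a Schwartz function. For instance, $\hat f(k^0,\bk)=k^0e^{-(k^0)^2-|\bk|^2}\in\cS(\bR^4)$ restricts to $|\bk|e^{-2|\bk|^2}$, which is not differentiable at $\bk=0$.
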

Below we formulate and prove the $\bJ$-analogue of this theorem.

\begin{theorem}\label{th:3}
 Let $\bF=\begin{pmatrix}
    f_1\\f_2
\end{pmatrix}$ and each component
$f_i\in\cS(\bR^4)$, i=1,2 and let \(m\ge 0\) be fixed. Define \(\omega_{\mathbf{k}} = \sqrt{m^2 + |\mathbf{k}|^2}\) for \(\mathbf{k}\in\mathbb{R}^3\).  Then each component of the 2-component vector
\be
\tilde\bF^\bJ(\mathbf{k}) = \int_{\mathbb{R}^4}  e^{-\bJ(\omega_{\mathbf{k}} t - \mathbf{k}\cdot\mathbf{x})}\,\bF(t,\mathbf{x})\, d^4x
\ee
belongs to \(\mathcal{S}(\mathbb{R}^3)\).
\end{theorem}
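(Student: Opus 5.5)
The plan is to reduce Theorem \ref{th:3} to Theorem \ref{th:2} by decomposing the matrix exponential into real and imaginary parts of an ordinary complex exponential. First I would use the identity $e^{\bJ\theta}=\id\cos\theta+\bJ\sin\theta$, proved in the derivation of \eqref{D4}. With $\theta=-(\omega_{\bk}t-\bk\cdot\bx)$ this gives
\[
\tilde\bF^\bJ(\bk)=\int_{\bR^4}\Big[\id\cos(\omega_{\bk}t-\bk\cdot\bx)-\bJ\sin(\omega_{\bk}t-\bk\cdot\bx)\Big]\bF(t,\bx)\,d^4x .
\]
Written in components, this reads
\[
\tilde F^\bJ_1=C[f_1]+S[f_2],\qquad \tilde F^\bJ_2=C[f_2]-S[f_1],
\]
where $C[f](\bk)=\int f\cos(\omega_{\bk}t-\bk\cdot\bx)\,d^4x$ and $S[f](\bk)=\int f\sin(\omega_{\bk}t-\bk\cdot\bx)\,d^4x$. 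Since $\cS(\bR^3)$ is a vector space, it then suffices to show that $C[f]$ and $S[f]$ lie in $\cS(\bR^3)$ for every real-valued $f\in\cS(\bR^4)$.

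The key observation is that $C[f]$ and $S[f]$ can be expressed through the function $\tilde f(\bk)$ of Theorem \ref{th:2}. For real $f$,
\[
C[f]=\mRe\,\tilde f(\bk),\qquad S[f]=-\mIm\,\tilde f(\bk).
\]
This is exactly the relation $\gamma(\tilde\bF^\bJ)=\widetilde{(f_1+if_2)}$, the analogue of Lemma \ref{lemma:1}. Equivalently, one has $C[f]=\tfrac12(\tilde f(\bk)+\overline{\tilde f(\bk)})$, and $\overline{\tilde f(\bk)}$ equals the integral with $e^{+i(\omega_{\bk}t-\bk\cdot\bx)}$ (this uses that $f$ is real). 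Theorem \ref{th:2} gives $\tilde f\in\cS(\bR^3)$. Both the real part and the imaginary part of a Schwartz function are Schwartz, because the seminorms $\sup|x^\alpha\partial^\beta\cdot|$ dominate those of $\mRe$ and $\mIm$. It follows that $C[f],S[f]\in\cS(\bR^3)$, which proves the theorem.

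The main point requiring care is the reliance on Theorem \ref{th:2} itself. This step is nontrivial, since it is a restriction of the 4D Fourier transform $\hat f(k^0,\bk)$ to the surface $k^0=\omega_{\bk}$. That surface is smooth for $m>0$, but for $m=0$ the function $\omega_{\bk}=|\bk|$ is not smooth at $\bk=0$. In that case smoothness of $\tilde f$ at the origin genuinely fails in general, so the $m=0$ case of Theorem \ref{th:2} should be read with the caveat that Hörmander's statement applies. I would take Theorem \ref{th:2} as given, as the statement permits.

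For $m>0$, I would also note a direct check as a backup. The map $\bk\mapsto(\omega_{\bk},\bk)$ is smooth with all derivatives of $\omega_{\bk}$ polynomially bounded. Hence $\tilde f(\bk)=\hat f(\omega_{\bk},\bk)$ is a composition of a Schwartz function with a smooth map of polynomial growth, and it satisfies $|\hat f(\omega_{\bk},\bk)|\le C_N(1+|\bk|)^{-N}$. The chain rule then yields the Schwartz bounds.
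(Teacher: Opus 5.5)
For $m>0$ your argument is correct, but it takes a different route from the paper. The paper invokes Theorem~\ref{th:1}, which says the four-dimensional $\bJ$-Fourier transform maps $\cS(\bR^4)\oplus\cS(\bR^4)$ into itself. It then simply asserts that restricting a Schwartz function to the mass shell $k^0=\omega_{\bk}$ gives a Schwartz function on $\bR^3$, with the proof ``the same as in the standard Fourier transform''.

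You bypass Theorem~\ref{th:1} entirely. The identity $e^{-\bJ\theta}=\id\cos\theta-\bJ\sin\theta$ gives $\gamma(\tilde\bF^{\bJ})=\widetilde{(f_1+if_2)}$, that is, $\tilde F^{\bJ}_1=\mRe\,\tilde f_1-\mIm\,\tilde f_2$ and $\tilde F^{\bJ}_2=\mRe\,\tilde f_2+\mIm\,\tilde f_1$. So the $\bJ$-statement becomes a purely algebraic corollary of the complex Theorem~\ref{th:2}.

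The paper's route has the merit of staying inside the real $\bJ$-calculus without passing through $\bC$, which fits its program. However, its only analytic step is left unproved. Your backup paragraph supplies exactly that step: smoothness of $\omega_{\bk}$, polynomially bounded derivatives, and $|(\omega_{\bk},\bk)|\ge|\bk|$ to transfer the decay. The same argument applied to the components of the four-dimensional $\bJ$-transform would also complete the paper's own proof.

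On $m=0$ you are right, and you should say so plainly instead of proposing to ``take Theorem~\ref{th:2} as given''. The statement is false there, and the paper's phrase ``almost smooth'' does not rescue it. For example, take $m=0$, $f_1(t,\bx)=-2t\,e^{-t^2-|\bx|^2}$ and $f_2=0$; a direct computation gives
\[
\tilde F^{\bJ}_2(\bk)=\pi^2\,|\bk|\,e^{-|\bk|^2/2},
\]
which is not differentiable at $\bk=0$. So your proof covers exactly the range where the theorem holds. For $m=0$ one only gets a continuous, rapidly decreasing function that is smooth on $\bR^3\setminus\{0\}$.
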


\begin{proof}
According the Theorem\ref{th:1} $\tilde \bF^\bJ_i(k^0,\mathbf{k})\in\mathcal{S}(\mathbb{R}^4)$, $i=1,2$. Observe that
\be
\bF^\bJ_i(\bk) = \tilde \bF^\bJ_i(\omega_{\mathbf{k}},\mathbf{k}),\quad i=1,2,
\ee
i.e. $\bG^\bJ_i$ are the restrictions of the Schwartz function $\tilde \bF^\bJ_i$ to the mass shell \(\Sigma = \{(k^0,\mathbf{k})\in\mathbb{R}^4: k^0 = \omega_{\mathbf{k}}\}\).

We must show that the maps $\bk\mapsto \tilde \bF_i^\bJ(\omega_{\bk},\bk)$  belong to \(\mathcal{S}(\mathbb{R}^3)\). This is a consequence of the fact that the mass shell is a smooth (for \(m>0\)) or almost smooth (for \(m=0\)) submanifold and that the restriction of a Schwartz function to such a submanifold, when parametrized by \(\mathbb{R}^3\), yields a Schwartz function on \(\mathbb{R}^3\).
The prof is the same as in the standard Fourier transform.
\end{proof}


\newpage

\subsection{$\bJ$-Sokhotski–Plemelj Formula}

We prove the matrix version of the Sokhotski formula - the $\bJ$-Sokhotski formula:
\be\label{KSp}
\lim_{\varepsilon \to 0^+} \frac{1}{x \pm \bJ\varepsilon} = \mathcal{P}\frac{1}{x} \mp \bJ\pi\,\delta(x),
\ee
where \(x\) is real, \(\bJ = \begin{pmatrix}0&-1\\1&0\end{pmatrix}\) satisfies \(\bJ^2 = -\id\), and the limit is understood in the sense of distributions. The proof follows the same steps as the complex case, because \(\bJ\) plays the role of \(i\).
\\

For \(\varepsilon>0\), we write
\be\label{C20}
\frac{1}{x + \bJ\varepsilon} = \frac{x - \bJ\varepsilon}{x^2 + \varepsilon^2}.
\ee
Let \(\varphi(x)\) be a smooth, compactly supported test function. Consider the integral
\be\label{C21}
I(\varepsilon) = \int_{-\infty}^{\infty} \frac{\varphi(x)}{x + \bJ\varepsilon}\,dx = \int_{-\infty}^{\infty} \frac{x\varphi(x)}{x^2+\varepsilon^2}\,dx \;-\; \bJ\varepsilon \int_{-\infty}^{\infty} \frac{\varphi(x)}{x^2+\varepsilon^2}\,dx.
\ee
Introducing  notation 
\be
J_\varepsilon=\int_{-\infty}^{\infty} \frac{\varepsilon \varphi(x)}{x^2+\varepsilon^2}\,dx\ee
one can see \cite{VVS} that  \be
\label{C23}\lim_{\varepsilon \to 0^+} J_\varepsilon =  \pi \varphi(0).
\ee
Hence the second term in \eqref{C21}
is
$ -\bJ\pi \varphi(0)$.
The limit of the first term then $\epsilon\to 0$  is
$\Pv \int_{-\infty}^{\infty} \frac{\varphi(x)}{x} \, dx$. Together with \eqref{C23} this gives \eqref{KSp} for $+$ in the LHS.
\\

Replacing \(\varepsilon\) by \(-\varepsilon\) gives
\[
\frac{1}{x - \bJ\varepsilon} = \frac{x + \bJ\varepsilon}{x^2+\varepsilon^2},
\]
and the same calculation yields the plus sign in front of the \(\bJ\pi\delta(x)\) term:
\[
\lim_{\varepsilon\to0^+} \frac{1}{x - \bJ\varepsilon} = \mathcal{P}\frac{1}{x} + \bJ\pi\,\delta(x).
\]
Thus the matrix $\bJ$-Sokhotski formula is proved.
\\

We also have
distributional identities
\be
\lim_{\varepsilon\to0^+}\frac{1}{-x^2+\bJ\,\varepsilon \,x^0}
=
-\cP\frac{1}{x^2}
-
\pi\,\operatorname{sgn}(x^0)\,
\delta(x^2)\,\bJ ,
\ee
\be\label{4.45}
 \lim _{\varepsilon\to 0}\frac{x }{x^2+\epsilon ^2} =\cP(\frac{1}{x});\qquad \lim _{\varepsilon\to 0}\frac{\epsilon }{x^2+\epsilon ^2}=\pi\delta(x),
   \ee
\newpage

\newpage

\section{$\bJ$-Cutkosky Rules }

It is well known that imaginary part of a loop amplitude arises when the internal particles in the loop go "on-shell," meaning they become real, physical particles. The Cutkosky rules formalize this by instructing us to "cut" through the diagram in all possible ways such that the cut lines can be put on-shell.
\\

The $\bJ$-Cutkosky rules give the discontinuity  $\bJ$-part of a Feynman
$\bJ$-diagram across a physical branch cut. They are in fact the diagrammatic form of
the orthogonality  of the $\bS$-matrix.
\\

The $\bJ$-Cutkosky
rules are the Feynman-diagram version of the orhogonality  formula  formula \eqref{UnT} that is
\be\label{UnTr}
 \bJ(\bT^T-\bT)=\bT^T \bT\ee
For a scalar propagator
\bea \label{bGr}
\tilde\bG^{\bJ}_{F}(p)=
 \bJ\,\cP(\frac{1}{p^2 - m^2 })+\pi \delta(p^2 - m^2 )
\eea
the cutting rule is
\be
\frac{\bJ}{k^2-m^2+\bJ 0}
\quad\longrightarrow\quad
2\pi\,\delta_+(k^2-m^2)\equiv 2\pi\,
\theta(k^0)\,\delta(k^2-m^2).
\ee
 here $\theta(k^0)$ is added as in the usual case to remove the negative-energy term.

The general Cutkosky formula has the schematic form
\be\label{disc}
\operatorname{Disc}\bM
=
\sum_{\text{cuts}}
\int d\Phi_{\text{cut}}\,
\bM_L\,\bM_R^T
\ee
where
\[
\operatorname{Disc}\bM
=
\bM(s+\bJ\,0)-\bM(s-\bJ\,0)
=
2\bJ\,(\bM^T-\bM)
\]
for a real physical variable \(s\). The two factors
$\bM_L$ and $\bM_R^*$ are the amplitudes on the two sides
of the cut diagram.
Here 
the symbol
$
d\Phi_{\text{cut}}
$
denotes the Lorentz-invariant phase-space measure of the particles crossing the
cut.
\newpage

\section{Realification of the Groups of the Standard Model}\label{app:GSM}
\subsection{Realification of $U(1)$}\label{app:E1}
\subsubsection*{General Realification Formula}
An element of $U(1)$ is
\begin{equation}
U(\theta)=e^{i\theta},
\qquad
\theta\in\mathbb{R},
\end{equation}
acting on
\begin{equation}
z=x+iy\in\mathbb{C}
\end{equation}
by
\begin{equation}
z\mapsto e^{i\theta}z.
\end{equation}

Since
\begin{equation}
e^{i\theta}=\cos\theta+i\sin\theta,
\end{equation}
we obtain
\begin{equation}
e^{i\theta}(x+iy)
=
(\cos\theta\,x-\sin\theta\,y)
+i(\sin\theta\,x+\cos\theta\,y).
\end{equation}

Identifying
\begin{equation}
\mathbb{C}\cong\mathbb{R}^2,
\qquad
x+iy\leftrightarrow
\begin{pmatrix}x\\y\end{pmatrix},
\end{equation}
the transformation is represented by
\begin{equation}
\cR(e^{i\theta})=
\begin{pmatrix}
\cos\theta&-\sin\theta\\
\sin\theta&\cos\theta
\end{pmatrix}.
\end{equation}
Thus $\cR:\,U(1)\longrightarrow GL(2,\mathbb{R}).$

For any complex number
$
u=a+ib
$,
\be
\cR(a+ib)=
\begin{pmatrix}
a&-b\\
b&a
\end{pmatrix}.
\ee
For $u=e^{i\theta}$,
$
a=\cos\theta$,
$b=\sin\theta,$
hence
\begin{equation}
\cR(e^{i\theta})=
\begin{pmatrix}
\cos\theta&-\sin\theta\\
\sin\theta&\cos\theta
\end{pmatrix}.
\end{equation}
Moreover,
$
\cR(u_1u_2)=\cR(u_1)\cR(u_2)$,
so $\cR$ is a faithful representation.

\subsubsection*{Orthogonal Property}
The realified matrix satisfies
\begin{equation}
\cR(e^{i\theta})^T\cR(e^{i\theta})=I_2,
\end{equation}
and
\begin{equation}
\det\cR(e^{i\theta})
=\cos^2\theta+\sin^2\theta=1.
\end{equation}
Therefore
\be\label{cRU1}
\cR(U(1))=SO(2),
\ee
so that
$
U(1)\cong SO(2)$.
\subsubsection*{Symplectic Property}
Since $
\cR(e^{i\theta})=e^{\theta \bJ}=I_2\cos\theta+\bJ\sin\theta$
we have
\be
\cR(e^{i\theta})^T
J
\cR(e^{i\theta})=J,
\end{equation}
and hence
$
\cR(U(1))\subset Sp(2,\mathbb{R}).
$
Together with \eqref{cRU1} we get 
\be
\cR(U(1))
=
SO(2)\cap Sp(2,\mathbb{R}).
\ee

\subsubsection*{Lie Algebra}
The Lie algebra is
\begin{equation}
\mathfrak{u}(1)=i\mathbb{R}.
\end{equation}
A generic element is
\begin{equation}
X=i\alpha,
\qquad
\alpha\in\mathbb{R},
\end{equation}
whose realification is
\begin{equation}
\cR(i\alpha)
=
\alpha
\begin{pmatrix}
0&-1\\
1&0
\end{pmatrix}
=\alpha \bJ.
\end{equation}
Thus
\begin{equation}
\mathfrak{u}(1)\cong\mathfrak{so}(2)=\mathbb{R}J.
\end{equation}

\subsubsection{Application: Charged Scalar Field and Real Covariant Derivative}
Let
$
\phi=\phi_1+i\phi_2$.
Introduce the real doublet
\begin{equation}
\Phi=
\begin{pmatrix}
\phi_1\\
\phi_2
\end{pmatrix}.
\end{equation}
Under a local $U(1)$ transformation,
$
\phi\mapsto e^{iq\alpha(x)}\phi,
$
the real field transforms as
$
\Phi\mapsto e^{q\alpha(x)\cJ}\Phi,
$
or explicitly,
\begin{equation}
\Phi\mapsto
\begin{pmatrix}
\cos(q\alpha)&-\sin(q\alpha)\\
\sin(q\alpha)&\cos(q\alpha)
\end{pmatrix}\Phi.
\end{equation}

The complex covariant derivative
$D_\mu=\partial_\mu+iqA_\mu$
becomes
$
\mathcal D_\mu
=
\partial_\mu I_2
+
qA_\mu \bJ.
$
It acts as
\be
\mathcal D_\mu\Phi=
\begin{pmatrix}
\partial_\mu\phi_1-qA_\mu\phi_2\\
\partial_\mu\phi_2+qA_\mu\phi_1
\end{pmatrix}.
\ee

\subsubsection*{Summary}
The realification of $U(1)$ is
\begin{equation}
e^{i\theta}
\longmapsto e^{\theta \bJ}=
\begin{pmatrix}
\cos\theta&-\sin\theta\\
\sin\theta&\cos\theta
\end{pmatrix},
\end{equation}
i.e.
\be
\cR(U(1))= SO(2)=SO(2)\cap Sp(2,\mathbb{R}),
\ee
and
\be
\mathfrak{u}(1)\cong\mathfrak{so}(2).
\ee
\subsection{Realification of $SU(2)$}\label{app:E2}
\subsubsection{ Explicit realification of $SU(2)$}
We write explicitly the realification of \(SU(2)\). We take its standard complex fundamental representation on \(\mathbb C^2\) and regard \(\mathbb C^2\) as a real vector space \(\mathbb R^4\) by forgetting the complex structure. Every \(U \in SU(2)\) then becomes a real \(4\times 4\) orthogonal matrix that also preserves the complex structure \(J\). The resulting subgroup of \(GL(4,\mathbb R)\) is isomorphic to \(SU(2)\) and is given explicitly below.

To this purpose we write parametrization of $SU(2)$.
Any $U \in SU(2)$ can be written as
\be\label{U}
U = \begin{pmatrix}
a & b \\
-\bar{b} & \bar{a}
\end{pmatrix},
\qquad a,b \in \mathbb C, \quad |a|^2 + |b|^2 = 1.
\ee
Writing
$
a = a_0 + i a_1,\quad b = b_0 + i b_1,
$
with $a_0,a_1,b_0,b_1 \in \mathbb R$. 
The matrix $U$ in \eqref{U} has  the form 
\be\label{MM}
M=A+iB,
\qquad
A,B\in M_2(\bR),
\ee
where 
\be
A = 
\begin{pmatrix}
a_0  &  b_0 \\
- b_0 & a_0 
\end{pmatrix},\quad
B=
\begin{pmatrix}
  a_1 &   b_1\\
  b_1 &  -  a_1
\end{pmatrix}.
\ee
According a general consideration a natural realification is defined by (see \cite{Ying:2025xyl}) 
\be\label{RMM}
\cR(M)
=\begin{pmatrix}
A&-B\\
B&A
\end{pmatrix}
\ee
Therefore,
\be
\cR(U)=
\begin{pmatrix}
a_0 & b_0 & -a_1 & -b_1 \\
-b_0 & a_0 & -b_1 & a_1 \\
a_1 & b_1 & a_0 & b_0 \\
b_1 & -a_1 & -b_0 & a_0
\end{pmatrix}.
\ee

The determinant of this matrix is [see Mathematica: MatrixM4.nb]
\be
\det[\cR(U)]=\left(a_0^2+a_1^2+b_0^2+b_1^2\right){}^2\ee
One can check explicitly  [see Mathematica: MatrixM4.nb] that this matrix is:
\begin{itemize}
    \item Orthogonal: \(\cR(U)^T \cR(U) = I_4\), hence \(\cR(U) \in O(4)\).
    \item Commutes with  $\bJ$, hence preserves the symplectic form: 
    \(\cR(U)^T J \cR(U) = J\), so \(\cR(U) \in Sp(4,\bR)\).
\end{itemize}
Therefore,
$
\cR(U) \in O(4) \cap Sp(4,\mathbb R) \cong U(2).
$

Moreover, the complex determinant of \(U\) is \(1\) (by definition of \(SU(2)\)). In the real \(4\times 4\) representation, this corresponds to the condition
$
\det_{\mathbb C} U = 1,
$
which picks out the subgroup \(SU(2)\) inside \(U(2)\). Thus, the set of all matrices \(\cR(U)\) for \(|a|^2+|b|^2=1\) is precisely the realification of \(SU(2)\).
\subsubsection{Application: scalar field doublet}
In the conventional complex formulation
\be\label{ReaU}
\phi(x)=
\begin{pmatrix}
\phi_1(x)\\
\phi_2(x)
\end{pmatrix}
\in\bC^2
\end{equation}
is a complex scalar field in the fundamental representation of $SU(2)$. 
Under
a local gauge transformation,
\begin{equation}
\phi(x)\longmapsto U(x)\phi(x),
\qquad
U(x)\in SU(2).
\end{equation}
Realification of the scalar doublet
\be\label{Phi12}
\phi_1
=
\frac{1}{\sqrt2}
\left(q_1+iq_2\right),
\qquad
\phi_2
=
\frac{1}{\sqrt2}
\left(q_3+iq_4\right).
\end{equation} 
is realized as 
\be\label{PhiU2}
\Phi
=
\begin{pmatrix}
q_1\\
q_2\\
q_3\\
q_4
\end{pmatrix}
\in\bR^4.
\end{equation}
Equivalently, if
\begin{equation}
\phi=u+iv,
\qquad
u,v\in\bR^2,
\end{equation}
then
\begin{equation}
\Phi=
\begin{pmatrix}
u\\v
\end{pmatrix}.
\end{equation}
Real covariant derivative 
are written using the real generators
$\bT_a=\mathcal R(t_a)$
\be
[\bT_a,\bT_b]
=
\epsilon_{abc}\bT_c,
\ee
\be
\bT_a^T=-\bT_a,
\ee
and
\be
[\bT_a,\bJ]=0.
\ee
Therefore,
\begin{equation}
\bT_a\in
\mathfrak{so}(4)\cap\mathfrak{sp}(4,\bR)
\simeq\mathfrak u(2).
\end{equation}

The real covariant derivative is

\begin{equation}
\mathcal D_\mu\Phi
=
\partial_\mu\Phi
+
gA_\mu^a\bT_a\Phi.
\end{equation}

\newpage
\subsection{Realification of $SU(3)$}\label{app:E3}
We start from the complex fundamental representation of $SU(3)$.
An element of the special unitary group $SU(3)$ is a complex
$3\times3$ matrix
\be\label{U(3)}
U=A+iB,\quad A,B\in M_3(\mathbb{R}),
\ee
satisfying
\begin{equation}
U^\dagger U=I_3,
\qquad
\det U=1.
\end{equation}
Writing
$
U=A+iB,
$
the unitarity condition is equivalent to
\be
A^TA+B^TB = I_3,
\quad
A^TB = B^TA.
\ee

The realification  is the homomorphism
$
\cR:\,SU(3)\longrightarrow GL(6,\mathbb R),
$
defined by
\be\label{cRu(3)}
\cR(U)=
\begin{pmatrix}
A&-B\\
B&A
\end{pmatrix}.
\ee
This matrix acts on the real vector space
$
\mathbb C^3\simeq\mathbb R^6.
$
Indeed, writing
\begin{equation}
z=x+iy,
\qquad
x,y\in\mathbb R^3,
\end{equation}
one has
\begin{equation}
\begin{pmatrix}
x\\
y
\end{pmatrix}
\longmapsto
\begin{pmatrix}
A&-B\\
B&A
\end{pmatrix}
\begin{pmatrix}
x\\
y
\end{pmatrix}.
\end{equation}
One can check that $\cR(U)$ is 
\begin{itemize}
\item orthogonal. Indeed, since $U$ is unitary,
$U^\dagger U=I$,
one check that 
\begin{equation}
\cR(U)^T\cR(U)
=
\begin{pmatrix}
A^TA+B^TB
&
-A^TB+B^TA
\\
-B^TA+A^TB
&
A^TA+B^TB
\end{pmatrix}
=
I_6.
\end{equation}
Therefore
$\cR(SU(3))\subset SO(6)$.
Moreover,
$\det \cR(U)
=
|\det U|^2
=
1.
$
\item
symplectic. A direct computation gives
\begin{equation}
\cR(U)^T
\bJ
\cR(U)
=
\bJ.
\end{equation}
Hence
$\cR(SU(3))
\subset
Sp(6,\mathbb R)$.
\end{itemize}

Thus every element of $\cR(SU(3))$ is simultaneously
 orthogonal and 
 symplectic
\be
\cR(SU(3))
\subset
SO(6)\cap Sp(6,\mathbb R).
\ee
One also has
\begin{equation}
\cR(SU(n))
=
\left\{M\in
SO(2n)\cap Sp(2n,\mathbb R)
:
\det_{\mathbb C}M=1\right
\}.
\end{equation}

\subsubsection{Explicit Realification of $SU(3)$ }

Let $U$ element of the fundamental representation of $SU(3)$. It can be represented as 
\begin{equation}
U(\boldsymbol{\theta})
=
\exp\left(
\frac{i}{2}H(\boldsymbol{\theta})
\right),
\end{equation}
where
\begin{equation}
H(\boldsymbol{\theta})
=\sum _{i=1}^8 \theta _i \lambda _i=
\begin{pmatrix}
\theta_3+\dfrac{\theta_8}{\sqrt{3}}
&
\theta_1-i\theta_2
&
\theta_4-i\theta_5
\\[2mm]
\theta_1+i\theta_2
&
-\theta_3+\dfrac{\theta_8}{\sqrt{3}}
&
\theta_6-i\theta_7
\\[2mm]
\theta_4+i\theta_5
&
\theta_6+i\theta_7
&
-\dfrac{2\theta_8}{\sqrt{3}}
\end{pmatrix},
\end{equation}
here $\lambda_i, i=1,...8$ are the Gell-Mann matrices.
The realification map is defined by
\begin{equation}
\cR(A+iB)
=
\begin{pmatrix}
A&-B\\
B&A
\end{pmatrix},
A,B\in M_3(\mathbb{R}).
\end{equation}

Since $\cR$ is an algebra homomorphism,
\begin{equation}
\cR(e^X)=e^{\cR(X)}.
\end{equation}
Therefore,
\begin{equation}
\cR\bigl(U(\boldsymbol{\theta})\bigr)
=
\exp\left[
\cR\left(
\frac{i}{2}H(\boldsymbol{\theta})
\right)
\right].
\end{equation}

Introduce
\begin{equation}
a=
\theta_3+\frac{\theta_8}{\sqrt{3}},
\qquad
b=
-\theta_3+\frac{\theta_8}{\sqrt{3}},
\qquad
c=
-\frac{2\theta_8}{\sqrt{3}}.
\end{equation}
and 
write
\begin{equation}
H=P+iQ,
\end{equation}
where
\begin{equation}
P=
\begin{pmatrix}
a&\theta_1&\theta_4\\
\theta_1&b&\theta_6\\
\theta_4&\theta_6&c
\end{pmatrix}
\end{equation}
is real symmetric, and
\begin{equation}
Q=
\begin{pmatrix}
0&-\theta_2&-\theta_5\\
\theta_2&0&-\theta_7\\
\theta_5&\theta_7&0
\end{pmatrix}
\end{equation}
is real antisymmetric.
Since
\begin{equation}
\frac{i}{2}H
=
-\frac{1}{2}Q+\frac{i}{2}P,
\end{equation}
its realification is
\begin{equation}
\cR\left(
\frac{i}{2}H
\right)
=
\frac{1}{2}
\begin{pmatrix}
-Q&-P\\
P&-Q
\end{pmatrix}.
\end{equation}

Consequently,
\be\label{rhoU}
\cR\bigl(U(\boldsymbol{\theta})\bigr)
=
\exp\left\{
\frac{1}{2}
\begin{pmatrix}
0&\theta_2&\theta_5
&
-a&-\theta_1&-\theta_4
\\
-\theta_2&0&\theta_7
&
-\theta_1&-b&-\theta_6
\\
-\theta_5&-\theta_7&0
&
-\theta_4&-\theta_6&-c
\\
a&\theta_1&\theta_4
&
0&\theta_2&\theta_5
\\
\theta_1&b&\theta_6
&
-\theta_2&0&\theta_7
\\
\theta_4&\theta_6&c
&
-\theta_5&-\theta_7&0
\end{pmatrix}
\right\}.
\ee
The matrix in the exponent is real and antisymmetric. Therefore,
\begin{equation}
\cR(U)^T\cR(U)=I_6.
\end{equation}
Introduce the real complex-structure matrix
\begin{equation}
\bJ=
\begin{pmatrix}
0&-I_3\\
I_3&0
\end{pmatrix}.
\end{equation}
The realified matrix commutes with $\bJ$,
\begin{equation}
\cR(U)\bJ=\bJ\cR(U),
\end{equation}
(see calculations Matrix6.nb) and consequently
\begin{equation}
\cR(U)^T J\cR(U)=J.
\end{equation}

Thus,
\begin{equation}
\cR(U)\subset SO(6)\cap Sp(6,\mathbb{R}).
\end{equation}

It is known that,
\begin{equation}
SO(6)\cap Sp(6,\mathbb{R})
\cong U(3),
\end{equation}
whereas the realified $SU(3)$ subgroup is characterized by the additional
complex determinant condition
$
\det_{\mathbb{C}}U=1.
$

Note that, consequently, in the realified Standard Model, the gauge groups $SU(2)$ and $SU(3)$ are realized as distinguished subgroups of $SO(4)\cap Sp(4,R)$ and $SO(6)\cap Sp(6,R)$, respectively, selected by the extra condition $\det
_\bC
=1$.
\\

Let us calculate the determinant of $\cR(U(\boldsymbol{\theta}))$ given by \eqref{rhoU}.
Using that the determinant of the exponential of a square matrix is given by the exponent of its trace,
\begin{equation}
\det(e^{M})=e^{\operatorname{Tr}M}.
\end{equation}
and that 
in the present case,
\begin{equation}
M=
\begin{pmatrix}
 0 & \theta _2 & \theta _5 &
   -a & -\theta _1 & -\theta _4 \\
 -\theta _2 & 0 & \theta _7 &
   -\theta _1 & -b & -\theta _6 \\
 -\theta _5 & -\theta _7 & 0 &
   -\theta _4 & -\theta _6 &
   -c \\
 a & \theta _1 & \theta _4 &
   0 & \theta _2 & \theta _5 \\
 \theta _1 & b & \theta _6 &
   -\theta _2 & 0 & \theta _7 \\
 \theta _4 & \theta _6 & c &
   -\theta _5 & -\theta _7 & 0
\end{pmatrix}.
\end{equation}
with
all diagonal entries vanish,
\begin{equation}
\operatorname{Tr}M=0,
\end{equation}
we get
\begin{equation}
\det(e^{M})
=
e^{\operatorname{Tr}M}
=
e^{0}
=
1.
\end{equation}
Moreover, the matrix $M$ is real and antisymmetric,
\begin{equation}
M^{T}=-M.
\end{equation}
Hence
\begin{equation}
(e^{M})^{T}
=
e^{M^{T}}
=
e^{-M},
\end{equation}
and therefore
\begin{equation}
(e^{M})^{T}e^{M}
=
e^{-M}e^{M}
=
I_{6}.
\end{equation}
Thus,
\begin{equation}
e^{M}\in O(6).
\end{equation}
Since its determinant is equal to $1$, it follows that
\begin{equation}
e^{M}\in SO(6).
\end{equation}
Therefore, the determinant of the $6\times6$ realified matrix is always
\begin{equation}
\det\!\left(\exp(M)\right)=1,
\end{equation}
independently of the values of the parameters
$a$, $b$, $c$, and
$\theta_1,\ldots,\theta_7$.

We  have also checked [Mathematica: Matrix6.nb] that 
\be
M\bJ=\bJ M.\ee

\newpage

\section{Realification of  Fermions Fields}\label{app:fermi}
\subsection{Realification of Dirac Fermions}
The realification of the Dirac field is an  reformulation of the standard complex Dirac theory entirely over the real numbers. Instead of a four-component complex spinor, one introduces an equivalent eight-component real spinor together with the $\bJ$ real matrix
replacing the imaginary unit $i$ throughout the theory.
This construction is the natural fermionic counterpart of the real formulation of quantum mechanics and real scalar quantum field theory.

The ordinary Dirac field that a complex four-component spinor
$
\psi(x)\in\mathbb C^4,
$
satisfying the Dirac equation
\begin{equation}
(i\gamma^\mu\partial_\mu-m)\psi=0,
\end{equation}
is 
write as
\begin{equation}
\psi=u+iv, \quad u,v\in\mathbb R^4.
\end{equation}
One defines the real eight-component spinor
\begin{equation}
\bPsi=
\begin{pmatrix}
u\\
v
\end{pmatrix}
\in\mathbb R^8.
\end{equation}
$\bJ$ is 
\be\label{bJ8}
\bJ=
\begin{pmatrix}
0&-I_4\\
I_4&0
\end{pmatrix},
\ee

Realification of $4\times4$ $\gamma$ matricies is the following. One represents them as 
$
\gamma^\mu
=
A^\mu+iB^\mu,
$
where $A^\mu$ and $B^\mu$ are real $4\times4$ matrices and 
\begin{equation}
\bGamma^\mu\equiv \cR(\gamma^\mu)
=
\begin{pmatrix}
A^\mu&-B^\mu\\
B^\mu&A^\mu
\end{pmatrix}.
\end{equation}
The matrices $\bGamma^\mu$ satisfy the Clifford algebra
\begin{equation}
\bGamma^\mu\bGamma^\nu
+
\bGamma^\nu\bGamma^\mu
=
2\eta^{\mu\nu}I_8.
\end{equation}
and commute with $\bJ$ matrix given by \eqref{bJ8}. The real Dirac equation takes a form
\be\label{Dir4}
(\bJ\bGamma^\mu\partial_\mu-m)\bPsi=0,
\ee
that is an entirely real first-order system.

The realification of the Lorentz Transformations is the following. 
If
$
S(\Lambda)\in GL(4,\bC)
$
is the usual spin representation, then its realification is
\begin{equation}
\bS(\Lambda)
=
\begin{pmatrix}
\mathrm{Re}\,S&-\mathrm{Im}\,S\\
\mathrm{Im}\,S&\mathrm{Re}\,S
\end{pmatrix}.
\end{equation}
It is oubvious that 
$
[\bS(\Lambda),\bJ]=0
$.
The real Dirac adjoint is
\be
\overline{\bPsi}
=
\bPsi^T\bGamma^0.
\end{equation}
The standard bilinears become
$
\bar\psi\psi
=
\overline{\Psi}\Psi$,
$
\bar\psi\gamma^\mu\psi
=
\overline{\Psi}\Gamma^\mu\Psi,
$ etc.

The covariant derivative
\begin{equation}
D_\mu
=
\partial_\mu
+
ieA_\mu \longmapsto
\partial_\mu
+
e\bJ A_\mu.
\end{equation}

The Dirac equation takes the form
\begin{equation}
(\bJ\,\bGamma^\mu D_\mu-m)\bPsi=0.
\end{equation}

Gauge transformations become
\be
\bPsi
\longrightarrow
e^{bJ\alpha(x)}
\bPsi,\qquad\mbox{
where}\quad 
e^{\mathbf \bJ\alpha}
=
\cos\alpha
+
\bJ\sin\alpha.
\ee

The standard equal-time anticommutation relations
\begin{equation}
\{
\psi_\alpha(\mathbf x),
\psi_\beta^\dagger(\mathbf y)
\}
=
\delta_{\alpha\beta}
\delta(\mathbf x-\mathbf y)
\end{equation}
become
\begin{equation}
\{
\bPsi_A(\mathbf x),
\bPsi_B(\mathbf y)
\}
=
\delta_{AB}
\delta(\mathbf x-\mathbf y),\quad A,B=1,\ldots,8.
\end{equation}

The realified chirality operator is
\begin{equation}
\bGamma^5
=
\mathcal R(\gamma^5)
=
\bJ\bGamma^0\bGamma^1\bGamma^2\bGamma^3.
\end{equation}
The real chiral projectors are
\begin{equation}
P_L^{\mathbb R}
=
\frac12\left(I-\bGamma^5\right),
\end{equation}
and
\begin{equation}
P_R^{\mathbb R}
=
\frac12\left(I+\bGamma^5\right).
\end{equation}
Since
\begin{equation}
[\bGamma^5,\mathbf J]=0,
\end{equation}
the left- and right-handed realified subspaces are preserved by the complex
structure.
Therefore,
\begin{equation}
S_{D,\mathbb R}
=
S_{L,\mathbb R}
\oplus
S_{R,\mathbb R}.
\end{equation}
Each chiral subspace has twice its usual complex dimension when regarded over
$\mathbb R$, but it contains exactly the same physical degrees of freedom.

\subsubsection*{Remark on Chirality}

The Standard Model is a chiral gauge theory. Left-handed and right-handed
fermions transform differently under $SU(2)$.
In a four-component notation, the usual chiral projectors are
\begin{equation}
P_L
=
\frac12\left(I-\gamma^5\right),\quad P_R
=
\frac12\left(I+\gamma^5\right)
\end{equation}
Their realifications are
\begin{equation}
\mathbb{P}_L
=
\mathcal{R}(P_L),
\quad
\mathbb{P}_R
=
\mathcal{R}(P_R).
\end{equation}
They satisfy
\begin{equation}
\mathbb{P}_L^2=\mathbb{P}_L,
\quad
\mathbb{P}_R^2=\mathbb{P}_R,
\quad
\mathbb{P}_L\mathbb{P}_R=0,\quad
\mathbb{P}_L+\mathbb{P}_R=\bI.
[\mathbb{P}_L,\bJ]=0,\quad [\mathbb{P}_R,\bJ]=0.
\end{equation}
The chirality conditions become
\begin{equation}
\mathbb{P}_L\mathbb{Q}_L
=
\mathbb{Q}_L,
\quad
\mathbb{P}_L\mathbb{L}_L
=
\mathbb{L}_L,
\end{equation}
and
\begin{equation}
\mathbb{P}_R\mathbb{U}_R
=
\mathbb{U}_R,
\quad
\mathbb{P}_R\mathbb{D}_R
=
\mathbb{D}_R,
\quad
\mathbb{P}_R\mathbb{E}_R
=
\mathbb{E}_R.
\end{equation}
Thus realification preserves the chiral structure of the Standard Model.

\newpage
\subsection{Realification of Majorana Fermions in Four-Dimensional Space-Time}
The Majorana fermion is especially natural in a real formulation of quantum
field theory. Unlike a general Dirac fermion, a Majorana fermion can already
be represented by a real four-component spinor in a suitable representation
of the Clifford algebra. Nevertheless, one may also formulate the theory in a
representation-independent way by realifying the complex Dirac spinor and
imposing the Majorana condition, see below \eqref{Mc}, as a real linear constraint.

In the standard formulation,
a Majorana spinor is a Dirac spinor satisfying the Majorana condition
\be\label{Mc}
\psi^C=\psi,\quad \mbox{where}
\quad
\psi^C=C\bar{\psi}^{\,T},
\ee
and the charge-conjugation matrix $C$ satisfies
\be
C\gamma^\mu C^{-1}
=
-\left(\gamma^\mu\right)^T.
\ee

In four-dimensional Minkowski space  there exists a Majorana representation 
in which the gamma matrices are purely imaginary:
\begin{equation}
\left(\gamma^\mu\right)^*
=
-\gamma^\mu.
\end{equation}

Therefore the matrices $i\gamma^\mu$ are real. Writing
\begin{equation}
\gamma^\mu=-iG^\mu,
\end{equation}
where $G^\mu$ are real matrices, the Dirac equation becomes
\begin{equation}
\left(G^\mu\partial_\mu-m\right)\psi=0.
\end{equation}
All coefficients in this equation are real. Consequently, the Majorana field
may be chosen as $
\psi(x)\in\mathbb{R}^4.$
Thus, in a Majorana basis, no doubling of the spinor space is necessary.

In a Majorana basis, the four real field components may be normalized so that
the equal-time canonical anticommutation relations take the form
\begin{equation}
\left\{
\psi_\alpha(t,\mathbf{x}),
\psi_\beta(t,\mathbf{y})
\right\}
=
\delta_{\alpha\beta}
\delta^{(3)}(\mathbf{x}-\mathbf{y})
\end{equation}

\newpage
\subsection{Realification of Weyl Fermions in Four-Dimensional Space-Time}\label{app:RealWeyl}
\subsubsection*{Weyl Spinors in 4 dimensional case}

The realification of Weyl fermions is subtler than the corresponding
construction for Dirac and Majorana fermions. A Weyl spinor is intrinsically
chiral and transforms in one of the two inequivalent complex spinor
representations of the Lorentz group. A single Weyl representation does not
admit a Lorentz-invariant reality condition. Nevertheless, it can be
reformulated as a real four-dimensional vector space equipped with a
distinguished complex structure.
\\

The connected Lorentz group $SO^+(1,3)$ has universal covering group
$
\operatorname{Spin}^+(1,3)\cong SL(2,\mathbb C).
$
Its two irreducible complex spinor representations are
$
\left(\frac12,0\right)$
and
$\left(0,\frac12\right)$.
A left-handed Weyl spinor is denoted by
$
\chi_\alpha(x)$,
$\alpha=1,2$
and transforms in
$
\left(\frac12,0\right)$.
A right-handed Weyl spinor is denoted by
$ \bar\eta^{\dot\alpha}(x),\,
\dot\alpha=1,2$,
and transforms in
$\left(0,\frac12\right)$.
Each Weyl spinor has two complex components, or equivalently four real
components.
\\

For a massless left-handed Weyl fermion, the equation of motion is
\be\label{Meq}
i\bar\sigma^\mu\partial_\mu\chi=0,\quad \mbox{where}\quad
\bar\sigma^\mu=(I_2,-\sigma^1,-\sigma^2,-\sigma^3).
\end{equation}
For a right-handed Weyl fermion, one may write
\begin{equation}
i\sigma^\mu\partial_\mu\bar\eta=0,\quad \mbox{where}\,
\sigma^\mu=(I_2,\sigma^1,\sigma^2,\sigma^3).
\end{equation}

The Pauli matrices are
\begin{equation}
\sigma^1=
\begin{pmatrix}
0&1\\
1&0
\end{pmatrix},
\qquad
\sigma^2=
\begin{pmatrix}
0&-i\\
i&0
\end{pmatrix},
\qquad
\sigma^3=
\begin{pmatrix}
1&0\\
0&-1
\end{pmatrix}.
\end{equation}

\subsubsection*{Realification of the Weyl Spinors}
Writing the complex left-handed Weyl spinor as
\begin{equation}\label{W1}
\chi=u+iv,\quad u,v\in\mathbb R^2,
\end{equation}
we
define the realified spinor
 $\Xi$
\begin{equation}\label{W2}
\Xi=\cR(\chi)=
\begin{pmatrix}
u\\
v
\end{pmatrix}
\in\mathbb R^4.
\end{equation}

Since for a complex $2\times2$ matrix
\be
M=A+iB,
\end{equation}
where $A$ and $B$ are real matrices,  its realification  is defined by
\begin{equation}
\mathcal R(M)
=
\begin{pmatrix}
A&-B\\
B&A
\end{pmatrix},
\end{equation}
the realifications of the identity and Pauli matrices are
\bea\nn
\mathcal R(I_2)&=&I_4,\quad
\mathcal R(\sigma^1)
=
\begin{pmatrix}
0&1&0&0\\
1&0&0&0\\
0&0&0&1\\
0&0&1&0
\end{pmatrix},
\quad
\mathcal R(\sigma^2)
=
\begin{pmatrix}
0&0&0&1\\
0&0&-1&0\\
0&-1&0&0\\
1&0&0&0
\end{pmatrix},
\quad
\mathcal R(\sigma^3)
=
\begin{pmatrix}
1&0&0&0\\
0&-1&0&0\\
0&0&1&0\\
0&0&0&-1
\end{pmatrix}.
\\\label{Sigma}
\eea
All these matrices are real and commute with $\bJ$.
\\

The left-handed Weyl equation \eqref{Meq}
becomes
\begin{equation}
\bJ\,
\overline{\Sigma}^{\mu}
\partial_\mu\Xi_L=0.
\end{equation}
Since $\bJ$ is invertible, this is equivalent to
\begin{equation}
\overline{\Sigma}^{\mu}
\partial_\mu\Xi_L=0.
\end{equation}
For a right-handed Weyl spinor, the corresponding real equation is
\begin{equation}
\bJ\,
\Sigma^\mu
\partial_\mu\Xi_R=0.
\end{equation}
Writing
\begin{equation}
\bar\sigma^\mu=A^\mu+iB^\mu.
\end{equation}
we define
\begin{equation}
\overline{\Sigma}^{\mu}
=
\begin{pmatrix}
A^\mu&-B^\mu\\
B^\mu&A^\mu
\end{pmatrix}.
\end{equation}
The equation
\begin{equation}
\overline{\Sigma}^{\mu}\partial_\mu\Xi=0
\end{equation}
is equivalent to
\begin{align}
A^\mu\partial_\mu u
-
B^\mu\partial_\mu v
&=0,
\\
B^\mu\partial_\mu u
+
A^\mu\partial_\mu v
&=0.
\end{align}

These equations are exactly the real and imaginary parts of
\begin{equation}
\bar\sigma^\mu\partial_\mu\chi=0.
\end{equation}
Hence the realified and complex formulations are mathematically equivalent.
\\

A left-handed Weyl spinor transforms as
\begin{equation}
\chi'(x')=S_L(\Lambda)\chi(x),
\quad \mbox{
where}\quad 
S_L(\Lambda)\in SL(2,\mathbb C).
\end{equation}
Its realification is
\be
\bS_L(\Lambda)
=
\mathcal R\bigl(S_L(\Lambda)\bigr)
=
\begin{pmatrix}
\operatorname{Re}S_L&-\operatorname{Im}S_L\\
\operatorname{Im}S_L&\operatorname{Re}S_L
\end{pmatrix}.
\ee

The realified spinor transforms according to
\begin{equation}
\Xi'(x')
=
\bS_L(\Lambda)\Xi(x).
\end{equation}
Moreover,
$[\bS_L(\Lambda),\bJ]=0$.
Note, that 
 a single Weyl spinor admits no Lorentz-invariant Majorana reality
condition in four-dimensional Minkowski space.
\\

The standard kinetic Lagrangian for a left-handed Weyl fermion is
\begin{equation}\label{LW}
\mathcal L_{\mathrm W}
=
i\chi^\dagger\bar\sigma^\mu\partial_\mu\chi.
\end{equation}

Because the field is Grassmann-valued, the realified action should be
constructed by realifying the Hermitian bilinear form rather than by simply
replacing every complex expression by an ordinary commuting real quadratic
form.
Let
\begin{equation}
\Xi=
\begin{pmatrix}
u\\
v
\end{pmatrix}.
\end{equation}
The Hermitian product on $\mathbb C^2$ induces a real Euclidean form
\begin{equation}
g(\Xi_1,\Xi_2)
=
\operatorname{Re}
\left(
\chi_1^\dagger\chi_2
\right),
\end{equation}
and a symplectic form
\begin{equation}
\omega(\Xi_1,\Xi_2)
=
g(\bJ\Xi_1,\Xi_2).
\end{equation}

The Weyl kinetic term may then be represented in real notation by the
corresponding real bilinear obtained from the RHS of \eqref{LW}.
Symbolically, one may write
\begin{equation}
\mathcal L_{\mathrm W}^{\cR}
=
g\left(
\Xi,
\bJ\,\overline{\Sigma}^{\mu}\partial_\mu\Xi
\right),
\end{equation}
with the understanding that $\Xi$ is Grassmann odd. 


\subsubsection*{Realification of the Gauge Interactions}
For an Abelian gauge interaction, the covariant derivative is
\begin{equation}
D_\mu
=
\partial_\mu+iqA_\mu.
\end{equation}
After realification,
\begin{equation}
\bD_\mu^{U(1)}
=
\partial_\mu+qA_\mu\bJ.
\end{equation}

The gauge transformation
$
\chi\rightarrow e^{iq\alpha(x)}\chi$
becomes
$
\Xi
\rightarrow
e^{q\alpha(x)\bJ}\Xi.
$
The group $U(1)$ acts as real orthogonal rotations preserving $\bJ$.

For a non-Abelian group with complex generators $T^a$, one replaces each
generator by its realification
$\bT^a=\mathcal R(T^a)$
and the realified covariant derivative  then has the form
\begin{equation}
\bD_\mu
=
\partial_\mu
+
gA_\mu^a\,
\mathcal R(iT^a).
\end{equation}

After realification, the usual Standard Model covariant derivative acting on Weyl fermions,
\begin{equation}
D_\mu
=
\partial_\mu
-i g_s G_\mu^a T_c^a
-i g W_\mu^i T_L^i
-i g'Y B_\mu,
\end{equation}
takes the form
\begin{equation}\label{covDer}
\bD_\mu
=
\partial_\mu
+
g_sG_\mu^a\mathbb T_c^a
+
gW_\mu^i\mathbb T_L^i
+
g'B_\mu\mathbb Y,
\end{equation}
where
$\mathbb T^a=\mathcal R(-iT^a)$ are real antisymmetric matrices and $\bY=Y\bJ$
denotes the realified 
$U(1)$ hypercharge generator.

Applying this to the left-handed Weyl fermionic kinetic term,
\begin{equation}
\cL_{kin}
=
i\psi^\dagger\bar\sigma^\mu D_\mu\psi,
\end{equation}
yields
 \be
\cL_{kin}^{\cR}
=
g\left(
\Xi,
\bJ\,\overline{\Sigma}^{\mu}\partial_\mu\Xi
\right),\ee
where $
\overline{\Sigma}^{\mu}
=
\mathcal R(\bar\sigma^\mu).$ The kinetic operator is therefore a real matrix differential operator.
\\

\end{document}